\begin{document}
 \title{Spatial DCT-Based Channel Estimation in Multi-Antenna Multi-Cell
  Interference Channels}
 \author{
 \IEEEauthorblockN{ Maha~Alodeh,~\IEEEmembership{Student~Member,~IEEE}, Symeon Chatzinotas, \IEEEmembership{Senior~Member,~IEEE,}~\newline
 Bj\"{o}rn Ottersten, \IEEEmembership{Fellow~Member,~IEEE}\thanks{Copyright (c) 2014 IEEE. Personal use of this material is permitted. However, permission to use this material for any other purposes must be obtained from the IEEE by sending a request to pubs-permissions@ieee.org.\newline
Maha Alodeh, Symeon Chantzinotas and  Bj\"{o}rn Ottersten are
with Interdisciplinary Centre for Security Reliability and Trust (SnT) at the University
of Luxembourg, Luxembourg. E-mails:\{ maha.alodeh@uni.lu, symeon.chatzinotas
@uni.lu and bjorn.ottersten@uni.lu\}. B. Ottersten is also with Royal Institute of Technology (KTH),
Stockholm, Sweden. E-mail\{bjorn.ottersten@ee.kth.se\}.\newline
This work is supported by Fond National de la Recherche Luxembourg (FNR)
projects,
project ID:4919957.}}\\} 
\maketitle
\IEEEpeerreviewmaketitle
\begin{abstract}
This work addresses channel estimation in multiple antenna multicell interference-limited networks. Channel state information (CSI) acquisition is vital for interference mitigation. Wireless networks often suffer from multicell
interference, which can be mitigated by deploying beamforming to spatially direct
the transmissions.  The accuracy of the estimated CSI plays
an important role in designing accurate beamformers that can control
the amount of interference created from simultaneous spatial transmissions
to mobile users. Therefore, a new technique based on the structure of the spatial covariance
 matrix and the discrete cosine transform (DCT) is proposed to enhance channel estimation in the presence of interference. Bayesian estimation and
Least Squares estimation frameworks are introduced
by utilizing the DCT to separate the overlapping spatial paths that
create the interference. The spatial domain is thus exploited to mitigate the contamination which is able to discriminate
 across interfering users. Gains over conventional channel estimation techniques are presented in our simulations which are also valid for a small number of antennas.\\
\begin{keywords}
 Channel estimation, training sequence contamination, discrete cosine transform, second
order statistics.
\end{keywords}
\end{abstract}

 \section{Introduction}
 Interference is the most critical factor for designing and scaling wireless networks, as it leads to the spectrum scarcity-congestion problem. Moreover, the design paradigm for cellular networks has been shifted from partial frequency reuse to full frequency reuse enhancing the spectrum utilization, and thus
making the problem of interference more acute. Therefore, the design of
future networks will require collaborating base stations to jointly serve their
users or to smartly mitigate the interference. This can be enabled by exchanging data information and channel state information (CSI). These designs have
received much attention in the literature, but their main drawback is the requirement of backhaul exchange for users' data, which requires major
upgrades of current infrastructure especially when CSI changes rapidly \cite{zakhour}-\cite{jun}. In an effort to tackle the interference issue, without data sharing over the backhaul network, various coordination techniques
has been proposed to handle the limited data exchange scenarios; for example the authors in \cite{zakhour} exploit the availability of CSI at base stations to design
precoding techniques that minimize the interference created by BS transmissions
by maximizing the signal to leakage noise ratio (i.e. virtual signal to interference
noise ratio), while the work in \cite{jun} investigates zero forcing beamforming in a coordinated multicell environment. Deploying these techniques requires accurate channel state information to design the suitable transmit and receive beamforming. 

The CSI acquisition techniques can be categorized into feedback and reciprocity
techniques. In the feedback systems, a training sequence is broadcasted by
the BS which is measured by users  and
a limited feedback link is considered from the users to the base station. In \cite{njindal}-\cite{adaptive}
such a mode is considered. In \cite{njindal}, the authors have proved that in order to
achieve full multiplexing gain in the MIMO downlink channel  in the high signal to noise ratio (SNR) regime,
the required feedback rate per user grows linearly with the SNR (in dB). The main result in \cite{yoo} is that the extent of CSI feedback
can be reduced by exploiting multi-user diversity. While in \cite{grassman}, it is shown that non-random vector
quantizers can significantly increase the MIMO downlink throughput. In \cite{yoo}, the authors design
a joint CSI quantization, beamforming and scheduling algorithm to attain optimal throughput
scaling. Authors in \cite{adaptive} present an investigation of  how many feedback bits per user are necessary to maintain the optimal multiplexing gain in a K-cell MIMO
interference channel. In the second mode, the distinguishing feature of such systems is the concept of reciprocity, where the uplink channel is utilized as an estimate of the downlink channel
reducing the feedback requirements. This is one of the main advantages of a TDD architecture in low mobility scenarios\cite{per}, as it eliminates the need for feedback, and joint uplink training combined with the reciprocity of the wireless medium are sufficient to
estimate the
desired CSI. TDD has the advantage over direct feedback since the users'
terminal do not need to estimate their own channel. In TDD systems, the base station estimates the channel state information (CSI) based on uplink training
sequences over the same frequency band, and then uses it to generate the beamforming vector in the downlink transmissions\cite{wolfgang}-\cite{gershman}.\smallskip

CSI is typically acquired by exploiting finite-length training sequences in the presence of inter-cell interference. Therefore, in a synchronous setting,
the training sequences from neighboring cells would contaminate
each other. While in an asynchronous setting, the training sequences are contaminated by
the downlink data transmissions. Recently, the problem of non-orthogonality of training sequences
has been thoroughly investigated \cite{jose}-\cite{gan}. It is pointed out in \cite{jose} that training sequence contamination
presents a huge challenge for performance and a robust precoding technique
is proposed to handle this kind of interference. Specifically,
it is shown that training sequence contamination effects \cite{larsson}-\cite{Jindal},
\cite{gan}-\cite{nossekt}
(i.e., the reuse of non-orthogonal training sequences across
interfering cells) causes the interference rejection performance
to quickly saturate with the number of antennas, thereby
undermining the value of MIMO systems in cellular networks\cite{larsson}.

 The problem of training sequence contamination can occur in two different scenarios:
 \begin{itemize}
 \item Inter-cell training sequence contamination: presents the most common scenario in which the users in the same cell utilize orthogonal training sequences while this set is reused in other cells.\smallskip
 \item Intra-cell training sequence contamination: most research ignores this
 kind of contamination as they assume no training sequence reuse
 within the same cell and the training sequences are orthogonal. However, it is
 worth investigating the scenario in which the number of simultaneously tackled transmissions in single cell is greater than number of training sequences\cite{maha_IWCMC}.\smallskip
 \end{itemize}
Although
we mention two types of contamination, handling the contamination in both scenarios is
the same. We focus on the first type of contamination in this work.
Various contamination mitigation techniques have been investigated in the literature;
 a novel asynchronous training sequence transmission is proposed in \cite{tdd}, where different time slots are allocated to the users who utilize the same
training sequence in different cells. The channel estimates are decontaminated from the downlink
interference by adding additional antennas to estimate and cancel the interference.
On the other hand, the effect of pilot contamination is thoroughly studied in the literature
and been characterized in \cite{larsson}-\cite{Jindal} using large system
analysis for different channels. Training sequence allocation techniques are proposed in \cite{nossekt}-\cite{haifan},
to find the optimal set of users that simultaneously utilizes the same training
sequence. The allocation schemes vary according to the considered scenario.
The result in 
\cite{nossekt} depends on the channel power and strongest interferer while
the work in \cite{haifan} relies on second order statistics to pick the users who have the lowest overlap in their subspaces.\smallskip 

In this paper, we tackle the problem of training sequence contamination in correlated
multiple user single input multiple output (MU-SIMO) in the uplink of TDD scenarios.
 Although
the multiple antennas at users' terminal provide additional degrees of freedom,
they present a source of additional complexity from practical perspective which is preferable to be tackled at the BS. Moreover, the assumption
of exploiting single antenna at the users' terminal can be motivated by the fact
that multiplexing multiple streams per users is less beneficial than sending
single stream exploiting receive combining techniques
 \cite{receive_emil}. Based on this observation,
we assume a uniform linear array (ULA) at the base station and exploit the embedded
information in second order statistics about users' favorable and unfavorable directions.

From a different perspective, orthogonal transformations are exploited
in the literature to virtually represent the information in a different domain which simplifies the analysis. In \cite{akbar}, a channel modelling problem for a single user multiple input multiple output (SU-MIMO) is investigated using orthogonal transformations to provide a geometric interpretation of
the scattering environment. This virtual transformation reveals
two important aspects:
the number of parallel channels and the
level of diversity and clarify their impact on capacity calculations.
In
this paper, we exploit the orthogonal transformations' capability of virtually
converting the information into a different domain and based on this we propose several signal
processing algorithms. One of the orthogonal transformations that has appealing
characteristics is the discrete cosine transform (DCT). DCT is utilized to compress the
dimensions used for channel estimation, thus facilitating
interference mitigation at the estimation process.
 The contributions of this paper can
be summarized as\smallskip

\begin{itemize}

\item Improving the contamination mitigation performance of the traditional Bayesian estimation (BE) at the estimation step. The authors in
\cite{haifan} exploit BE to mitigate the contamination and they analyzed its
performance using large system analysis. In this work, we combine BE with
DCT in different algorithms to decontaminate the training sequences at the
estimation process. Moreover, a modification is employed on traditional BE to further improve the contamination rejection from the target estimate. Numerically, the proposed algorithms are shown to outperform the estimation
algorithm
proposed in \cite{haifan}. 

\item Traditionally, least squares estimation (LS) is not capable of discriminating the interference. We apply LS in the DCT domain to mitigate the contamination at the estimation
process
by exploiting DCT characteristics.\smallskip 
\item Training sequence
allocation techniques are proposed to find the optimal set of users that utilize
the same training sequences.  In an effort to utilize the multiuser
diversity concept as in \cite{nossekt}-\cite{haifan}, we propose joint training sequence allocation and
DCT compression to combine the benefit of both schemes. These allocation techniques further suit the nature of enhanced LS and BE thereby outperforming
the traditional techniques.

\end{itemize}
\vspace{-0.4cm}
\subsection{Notation}
The adopted notations in the paper are as follows: we
use uppercase and lowercase boldface to denote matrices and vectors. Specifically, $\mathbf{I}_K$
denotes the $K\times K$ identity matrix. Let $\mathbf{X}^T$, $\mathbf{X}^*$ and
$\mathbf{X}^H$ denote the transpose, conjugate, and conjugate transpose
of a matrix $\mathbf{X}$ respectively. $\mathbb{E}$ refers to the expectation,
$\|\cdot\|$
denotes the Frobenius norm, and $\Vert\cdot\Vert_0$ denotes the zero norm. The Kronecker product of two matrices $\mathbf{X}$
and $\mathbf{Y}$ is denoted by $\mathbf{X}\otimes\mathbf{Y}$. The notation
used  for $\triangleq$ is used for definitions. $\mathbf{1}^{K\times K}$, $\mathbf{0}^{K\times K}$
are the matrices of all ones and zeros with size $K\times K$ respectively,
$\odot$ denotes Hadamard product and $[\mathbf{A}(k,n)]$ is $(k,n)^{th}$
element of matrix $\mathbf{A}$. Let $tr(\mathbf{X})$, $vec(\mathbf{X})$ denote the trace operation,and the column vector obtained
by stacking the columns of $\mathbf{X}$. $\mathcal{CN}(a,\mathbf{R})$ is used to denote circularly symmetric
complex Gaussian random vectors, which has the mean $a$ and
the covariance matrix $\mathbf{R}$. Finally, $\cup$ is the union of sets
and $f(x)$ denotes a function of $x$.
\vspace{-0.4cm}
\section{System model}
Our model consists of a network of $C$ time-synchronized
cells with full spectrum reuse, each one of the cells serves $L$ users. Estimation of flat block fading, narrow band
channels in the uplink is considered, and all the base stations
are equipped with an $M$-element uniform linear array (ULA) of antennas. We assume that the training sequences, of length $\tau$ symbols, used by single-antenna
users in the same cell are mutually orthogonal. However,  training sequences are reused in a multicell environment from cell
to cell. The training sequences used for estimating the user channels are denoted by $\mathbf{s}_i\triangleq\begin{array}{ccc}[s_{i1}&\hdots&s_{i\tau}]^T\end{array}\in\mathbb{C}^{\tau\times 1}$. The training sequence symbols are normalized such that $\{|s_{ij}|^2=\frac{P}{\tau},
\forall
j\in \tau\}$, where $P$ is the total training sequence power.
Channel vectors are assumed to be $\mathbb{C}^{M\times1}$ Rayleigh fading
with correlation due to the finite multipath angle spread seen from the base station side, the $l^{th}$ user's channel towards $c^{th}$ BS is given by $\mathbf{h}_{lc}\sim\mathcal{CN}(0,\alpha_{lc}\mathbf{R}_{lc})\in
\mathbb{C}^{M\times 1}$, where $\alpha_{lc}$ is the attenuation from the $l^{th}$ user to $c^{th}$
BS. We denote
the channel covariance matrix $\mathbf{R}_{lc}\in\mathbb{C}^{M \times M}$ as
$\mathbf{R}_{lc}=\mathbb{E}[\mathbf{h}_{lc}\mathbf{h}_{lc}^H]$.

Considering the transmission of $\mathbf{s}_i$
sequence, the $ M\times {\tau}$  signal baseband symbols sampled at the
$c^{th}$ target base station is
\vspace{-0.2cm}
\begin{eqnarray}
\mathbf{Y}_c=\displaystyle\sum_{i}\sum_{\forall l\in \mathcal{K}_i}\mathbf{h}_{lc}\mathbf{s}^T_i+\mathbf{N}_c.
\end{eqnarray}
where $\mathcal{K}_i$ is the set of users who use the training
sequence $\mathbf{s}_i$. $\mathbf{N}_c\in\mathbb{C}^{M\times \tau}$  is the spatially and temporally white
complex additive Gaussian noise (AWGN) with element-wise variance $\sigma^2$. We define a training matrix $\mathbf{S}_i=\mathbf{s}_i\otimes \mathbf{I}_M$, such that $\mathbf{S}^H_i\mathbf{S}_i=\tau\mathbf{I}_M$. Then, the received training
signal at the target base station can be expressed as
\begin{equation}
\label{rx}
\vspace{-0.2cm}
\mathbf{y}_c=vec(\mathbf{Y}_c)=\sum_{i}\mathbf{S}_i\displaystyle\sum^{C}_{\forall l\in
\mathcal{K}_i}\mathbf{h}_{lc}+\mathbf{n}_c
\end{equation}
where $\mathbf{n}_c\in \mathbb{C}^{M\tau\times 1}=\text{vec}(\mathbf{N})$ is the sampled noise at the $c^{th}$ BS. Since the pilots are orthogonal $\mathbf{s}^H_i\mathbf{s}_j=0$,
this makes $\mathbf{S}^H_i\mathbf{S}_j=\mathbf{0}^{M\times M}$. Due to orthogonality between different sequences, the sampled signal resulted
from the transmissions of the $i^{th}$ training sequence at $c^{th}$
BS can be isolated from other training sequences and can be expressed as
\begin{eqnarray}
\mathbf{y}_c=\mathbf{S}_i\displaystyle\sum_{\forall l\in \mathcal{K}_i}\mathbf{h}_{lc}+\mathbf{n}_c
\end{eqnarray}
 For the sake of simplicity and without lack
of generalization, we drop the training sequence
index and assume that a single training sequence is used over the network, which makes
the baseband signal sampled at $c^{th}$ base station as
\begin{equation}
\label{yc}
\mathbf{y}_c=\mathbf{S}\displaystyle\sum_{l=1}^{C} \mathbf{h}_{lc}+\mathbf{n}_c
\end{equation}
where
 $l=1,\hdots, C$ denotes the users transmitting the training sequence $\mathbf{s}$. Furthermore, we assume that there is time
synchronization in the system for coherent uplink transmissions.
\vspace{-0.6cm}
\subsection{Channel model}
We consider a uniform linear array (ULA) whose
response vector
can be expressed as
\begin{eqnarray}
\mathbf{a}(\omega)=\begin{array}{cccc}[1&e^{-j\omega}&\hdots&e^{-j(M-1)\omega}]^T\end{array}
\end{eqnarray}
where $\omega=\frac{2\pi d\sin\theta}{\lambda}$, $d$ is the antenna spacing at the base station,
$\lambda$ is the signal wavelength and $\theta$ is angle of arrival of single
path.
The received signal at the base station can be
expressed as a multipath model utilizing the response array vector as
\vspace{-0.2cm}
 \begin{eqnarray}
 \label{multipath}
 \mathbf{h}_{lc}=\sum_{i=1}^{Q}\gamma_i\mathbf{a}(\omega_i)
 \end{eqnarray}
 where $\gamma_{i}$ is complex random gain factor, $\theta_i$ is the angle
 of the arrival of the $i^{th}$ path, $Q$ is the number of paths.
 We
adopt a generic Toeplitz correlation model as it is the suitable model for
implementing the correlation from theoretical \cite{bjorn} and practical perspectives\cite{perz}. The two generic correlation types have a generic
Toeplitz structure.

In order to analytically study the performance of the proposed technique,
we deal with a simplified exponential correlation model due to its mathematical tractability
\cite{bjorn}. The correlation structure of $\mathbf{R}_{lc}$ can be formulated as following 
\subsubsection{Exponential Correlation\cite{bjorn}}
 It is known that exponential correlation matrix is a special case of Toeplitz and it is often
used for ULA
system, and it has the following formulation
\vspace{-0.1cm}
\begin{eqnarray}
\label{exp}
\small\hspace{-0.2cm}[\mathbf{R}(i,j)]=\begin{cases}\rho^{|i-j|},i>j\\
\rho^{|i-j|*},i<j
\end{cases}
 \end{eqnarray}\smallskip
 where $\rho\in \mathbb{C}, |\rho|\leq 1$. This kind of correlation is suitable for theoretical analysis, and it will be
 used in the next sections to study the benefits of the proposed framework.

 \subsubsection{Practical Correlation \cite{perz}}
 \label{practical}In order to approximate the practical correlation, the received signal at the base station can be
 implemented as limited memoryless multipath model with single tap utilizing the response array vector as (\ref{multipath}). \smallskip
  For a multipath scattering confined to a relatively small angular spread seen from the base station.
 A general correlation structure can be well approximated by
 \begin{eqnarray}
 \nonumber
 \mathbf{R}\approx\mathbf{D}_a\mathbf{B}\mathbf{D}^H_{a}
 \end{eqnarray}
where $\sigma_{\omega}=2\pi\frac{d}{\lambda}\sigma_{\theta}\cos\theta$, $\mathbf{D}_a=\text{diag}[\mathbf{a}(\omega)]$.\, $\sigma_{\theta}$ is
the standard deviation of the angular spread.\smallskip

$\mathbf{B}$ depends on the angular spread of the
multipath components. The angular distribution is Gaussian $\tilde{\omega}\in\mathcal{N}(0,\sigma_{\omega})$,
it can be written as
 \begin{eqnarray}
 \label{gaussian}
 [\mathbf{B}(m,n)]\simeq \exp({\frac{((m-n)\sqrt{3}\delta_{\omega})^2}{2}}).
 \end{eqnarray}

 For $\tilde{\omega}$ uniformly distributed over $[-\delta_{\omega},\delta_{\omega}]$,
 it has the following structure\smallskip
 \begin{eqnarray}
 \label{uniform}
 [\mathbf{B}(m,n)]\simeq\frac{\sin((m-n)\delta_{\omega})}{(m-n)\delta_{\omega}}.
 \end{eqnarray}
and $\sigma_{\omega}=\sqrt{3}\delta_{\omega}$. We adopt
the both correlation models: exponential and practical to test the efficiency
of the proposed algorithms. From the previous correlation expressions, it
can be argued that each user in the cell has a different covariance matrix due to
its position as the covariance matrix is a function of the angular spread
and its corresponding distribution.

The covariance information of
the target users and interfering users can be acquired   
exploiting resource blocks where the desired user and
interference users are known to be assigned training sequences
at different times. Alternatively, this information can be obtained using the
knowledge of the approximate users' positions and the type of the angular
spread at BS side exploiting the correlation equations (\ref{gaussian})-(\ref{uniform}).

\section{dct for spatial compression}
The optimal transform that decorrelates the signal is Karhunen-Lo\'{e}ve Transform (KLT) which requires
 significant computational resources \cite{klt}. Therefore, fixed transforms are preferred in
 many applications. In contrast to data-dependent transform KLT, DCT is a fixed
 transform that does not depend on the data structure and has excellent energy
 compaction properties that perform very close to KLT. Although there are
 several fixed transform techniques (i.e. FFT, DFT), DCT is more useful in
 the context of this paper since it has higher compression efficiency in
 comparison with the other techniques. DCT is a technique for converting a signal into elementary
frequency components.  It transforms the signal from time domain to frequency domain. Most of the signal information tends to be
concentrated in a few components of the DCT if the information is correlated \cite{dcto}. Therefore,
the signal can be compressed by keeping the important frequencies and
truncating the least influential ones without impacting the quality of estimate. DCT's qualities
motivate its utilization in time-domain estimation in OFDM-MIMO to reduce
the border effect owing to its capacity to reduce the high frequency components in the transform domain\cite{dct}.  In this work, we do not tackle
the multicarrier OFDM, we utilize the DCT to handle the interference at the
estimation process in multiple antenna multicell systems.\smallskip

The compression capability of DCT\footnote{See eq. (\ref{dctm})- (\ref{dctn}) for more details about DCT. 
} is depicted in Fig. 1 for a covariance
matrix as in (\ref{exp}) with $\rho=0.9$. The signal energy
is condensed in the first few spatial frequencies, and it can be noticed  that
$93\%$ of signal is compressed in the first four coefficients. The DCT capability compression
 helps in categorizing the important
and the unimportant spatial frequencies in order to concentrate the contamination
in fewer dimensions.
In order to evaluate the effectiveness of DCT compaction property, we study the correlation models
mentioned in the system model section.\smallskip

 In order to study numerically the compression capability of DCT, we note
 that $\mathbf{R}$ spans a set of eigenspaces which determines the directions
 of the transmissions.
 Each eigenspace acquires certain power, which is known as the eigenvalue  of this space. In this direction, we use the metric
 of the standard condition number (SCN), which is defined as the ratio of the maximum eigenvalue to the
 minimum eigenvalue and denoted by $\chi(\mathbf{R})$, before and after DCT
has been applied. The intuition behind using the SCN is to measure the dynamic
range of eigenvalues which is representative of the compression level.
This can be demonstrated the reduction of SCN, since preserving a small number of
strong eigenvalues means that the signal space can be effectively represented
by fewer eigenvectors. Thus, the same
energy amount is contained in lower dimensional subspace which indicates a compression. The maximum and the minimum eigenvalues for exponential correlation matrix
$\mathbf{R}$ as in eq. (\ref{exp}) are $\frac{1}{(1-\rho)^2}$ and $\frac{1}{(1+\rho)^2}$ respectively \cite{transform}.\smallskip
\begin{newtheorem}{lemma}{Lemma}\nonumber
\begin{lemma}
\label{lemmad}
\cite{transform}: The SCN of any exponential-form correlation matrix $\mathbf{R}$ is thus given by\smallskip
\begin{eqnarray}
\small\lim_{M\rightarrow \infty}\chi\text{ }(\mathbf{R})=\Big(\frac{1+\rho}{1-\rho}\Big)^2
\end{eqnarray}
The SCN can be very large for highly correlated data ($\rho\rightarrow
1$).\\
\end{lemma}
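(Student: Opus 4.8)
The plan is to reduce the claim to its definitional content. By definition $\chi(\mathbf{R})=\lambda_{\max}(\mathbf{R})/\lambda_{\min}(\mathbf{R})$, so once the two extreme eigenvalues are pinned down in the large-array regime, the statement is just their quotient. The excerpt already records, for the exponential model of eq.~(\ref{exp}), that the largest and smallest eigenvalues tend to $\frac{1}{(1-\rho)^2}$ and $\frac{1}{(1+\rho)^2}$ as $M\rightarrow\infty$. Hence the only thing left to verify is that the limit of the ratio coincides with the ratio of the limits, after which the closed form drops out by elementary algebra.

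First I would write $\chi(\mathbf{R})=\lambda_{\max}(\mathbf{R})/\lambda_{\min}(\mathbf{R})$ and note that for $|\rho|<1$ both extreme eigenvalues converge to strictly positive, finite constants. Since the denominator limit $\frac{1}{(1+\rho)^2}$ is bounded away from zero, the algebra of limits applies and
\begin{eqnarray}
\nonumber
\lim_{M\rightarrow\infty}\chi(\mathbf{R})=\frac{\lim_{M\rightarrow\infty}\lambda_{\max}(\mathbf{R})}{\lim_{M\rightarrow\infty}\lambda_{\min}(\mathbf{R})}=\frac{1/(1-\rho)^2}{1/(1+\rho)^2}=\left(\frac{1+\rho}{1-\rho}\right)^2.
\end{eqnarray}
The concluding remark that $\chi(\mathbf{R})$ grows without bound as $\rho\rightarrow 1$ then follows by inspection, since the numerator $(1+\rho)^2\rightarrow 4$ while the denominator $(1-\rho)^2\rightarrow 0$.

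If one does not wish to quote the extreme eigenvalues but to derive them, the substantive work lies in recognising $\mathbf{R}$ as a Hermitian Toeplitz matrix generated by the sequence $\{\rho^{|k|}\}$, with symbol $f(\theta)=\sum_k \rho^{|k|}e^{-jk\theta}$, which for real $|\rho|<1$ is a smooth function of $\theta$ attaining its extremes at $\theta=0$ and $\theta=\pi$. One would then invoke the extremal form of Szeg\H{o}'s theorem for Toeplitz sequences, which guarantees that $\lambda_{\max}(\mathbf{R})$ and $\lambda_{\min}(\mathbf{R})$ converge monotonically to $\max_{\theta} f(\theta)$ and $\min_{\theta} f(\theta)$. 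Because $\chi$ is invariant under any uniform scaling of $\mathbf{R}$, only the ratio $\max_{\theta} f(\theta)/\min_{\theta} f(\theta)$ matters, and this ratio equals $\left((1+\rho)/(1-\rho)\right)^2$ regardless of the overall normalisation of the symbol, reproducing the formula above. I expect the convergence of the \emph{extremal} eigenvalues — as opposed to the weaker statement about the limiting eigenvalue \emph{distribution} — to be the main obstacle, since it requires the Toeplitz extremal eigenvalue theorem rather than the distributional Szeg\H{o} theorem; once that is in place, the result is immediate.
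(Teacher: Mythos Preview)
Your proposal is correct and matches the paper's treatment: the paper does not prove Lemma~\ref{lemmad} at all but simply cites it from \cite{transform}, having stated immediately beforehand that the extremal eigenvalues of the exponential correlation matrix are $\frac{1}{(1-\rho)^2}$ and $\frac{1}{(1+\rho)^2}$ (also from \cite{transform}), so the lemma is just their quotient. Your additional sketch via the Szeg\H{o} extremal-eigenvalue theorem for Hermitian Toeplitz matrices goes beyond what the paper offers and is the right way to recover those extreme eigenvalues if one does not want to take them on faith.
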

\vspace{-0.1cm}
\end{newtheorem}\smallskip
\begin{figure}[h]
\begin{center}
\label{figo}
\includegraphics[scale=0.65]{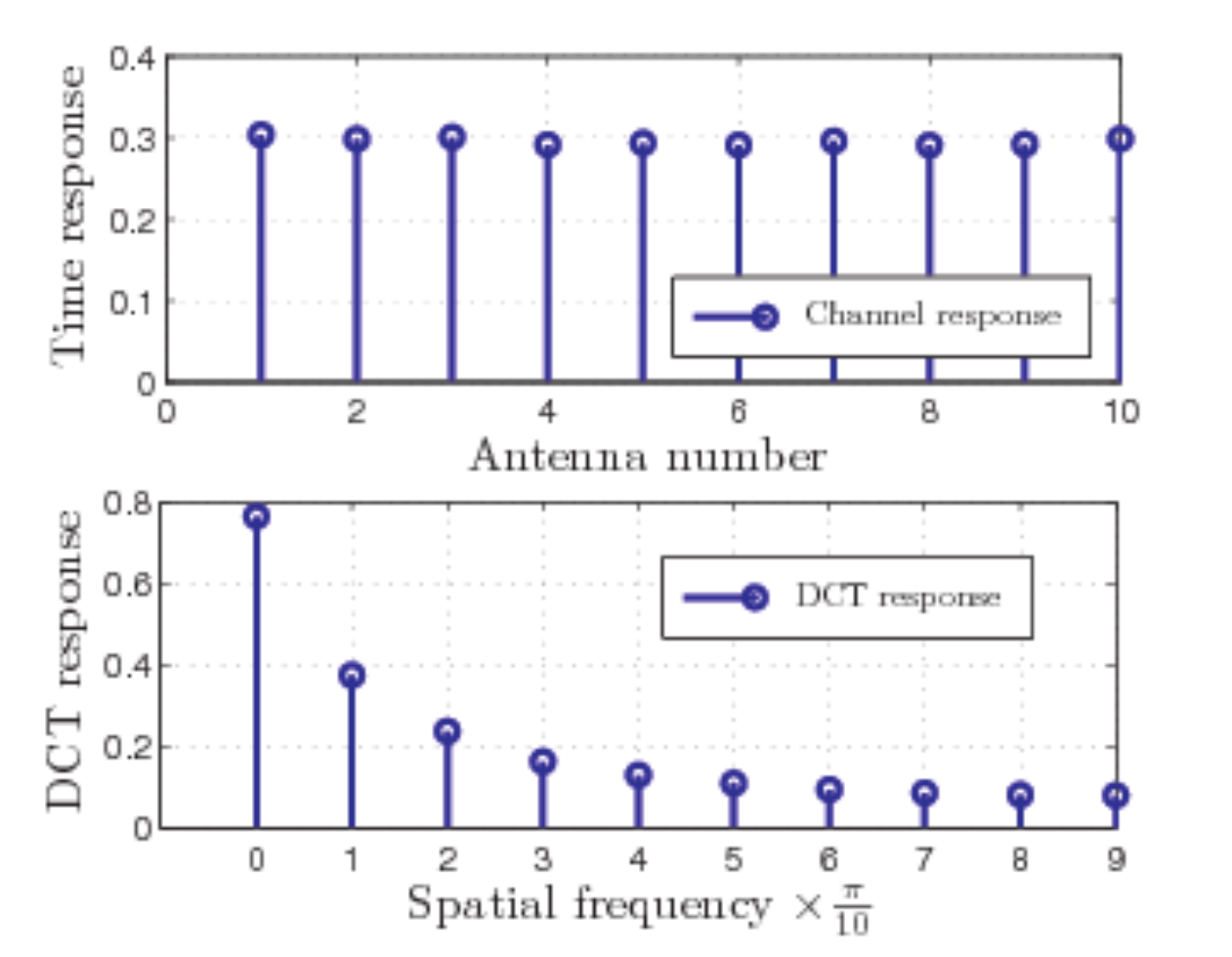}
\caption{\textit{\small Comparison between the DCT spatial frequency response
and spatial time response at the scenario of $\rho=0.9$ in (\ref{exp}).}}
\end{center}
\end{figure}
 The DCT basis comprises of the eigenvectors of the following symmetric tri-diagonal matrix\cite{cg}
 \begin{eqnarray}
\small \mathbf{Q}_c=\small\begin{bmatrix}\begin{array}{ccccc} 1-\zeta&-\zeta&0&\hdots&0\\
 -\zeta&1&0&\hdots&0\\
 \vdots&\ddots&\ddots&\ddots&\vdots\\
  \vdots&\ddots&-\zeta&1&-\zeta\\
 0&\hdots&0&-\zeta&1-\zeta\end{array}\end{bmatrix}
 \end{eqnarray}
 where $\zeta=\frac{\rho}{1+\rho^2}$. In matrix form, the eigenvectors are given by
\vspace{-0.2cm}
\begin{eqnarray}
\label{dctm}
[\mathbf{U}_D(k,n)]=c[k]\cos\bigg(\frac{(2n+1)k\pi}{2M}\bigg).
\end{eqnarray}
As a result, the definition of DCT has the following interpretation
\begin{eqnarray}
\label{dct}
\mathbf{m}_d[k]&=&c[k]\sum_{n=0}^{M-1}\mathbf{m}[n]\cos\Bigg(\frac{\pi(2n+1)k}{2M}\Bigg),
\end{eqnarray}
where $\mathbf{m}$ is the time domain vector and $\mathbf{m}_{d}$
is the DCT representation of $\mathbf{m}$
and can be written as:
\begin{equation}\label{dctn}
\mathbf{m}[n]=\sum_{k=0}^{M-1}c[k]\mathbf{m}_d[k]\cos\Bigg(\frac{\pi(2n+1)k}{2M}\Bigg),
\end{equation}
where $k$ represents the spatial frequency coefficient, $n$ denotes the antenna
number, and finally $c[k]$ is defined as below:
\begin{equation}
\nonumber
c[k]=\begin{cases}\sqrt{ \frac{1}{M}} \hspace{0.5cm}, k=0,\\
\sqrt{\frac{2}{M}}\hspace{0.5cm}, \text{otherwise}.
\end{cases}
\end{equation}

\vspace{-0.4cm}    
\subsection{DCT for Uniform Linear Array}
In order to interpret the DCT channel response, we need to find the DCT for ULA response, which can be expressed as (\ref{array}). From (\ref{array}), the DCT for channel vector following the expression in (\ref{multipath})
can be implemented as (\ref{arraydct}). To verify
the compression efficiency of employing DCT, we study its impact on the change
of eigenvalues of the exponential correlation matrix.

\begin{figure*}[t]
\hspace{0.2cm}
\begin{tabular}[t]{c}
\begin{minipage}{17 cm}
\nonumber
\small\begin{eqnarray}
\label{array}
\hspace{-0.6cm}
\mathbf{a}^{\text{DCT}}[k]=\small\begin{cases}\begin{array}{ccc}\small\sqrt{\frac{1}{M}}e^{-j\omega_i\frac{M-1}{2}}\frac{\sin \frac{M}{2}\omega_i}{\sin\frac{\omega_i}{2}}& &, k=0,\\
\sqrt{\frac{2}{M}}\frac{\sin(\small(\omega_i+\frac{\pi k}{M}\small)\frac{M}{2})}{\sin(\small(\omega_i+\frac{\pi
k}{M}\small
)\frac{1}{2})}e^{j\frac{(\omega_i+\frac{\pi k}{M})(M-1)}{2}+\frac{\pi k}{M}}+\sqrt{\frac{2}{M}}\frac{\sin(\small(\omega_i-\frac{\pi k}{M}\small)\frac{M}{2})}{\sin(\small(\omega_i-\frac{\pi
k}{M}\small
)\frac{1}{2})}e^{j\frac{(\omega_i-\frac{\pi k}{M})(M-1)}{2}-\frac{\pi k}{M}}&
&, 0\leq k\leq M-1.
\end{array}\end{cases}
\end{eqnarray}\smallskip

\begin{eqnarray}
\label{arraydct}
\hspace{-1.6cm}\mathbf{h}^{\text{DCT}}_{lc}[k]=\small\begin{cases}\begin{array}{ccc}\small\sum^P_{i=1}\frac{\gamma_i\sqrt{\frac{1}{M}}}{\sqrt{P}}e^{-j\omega_i\frac{M-1}{2}}\frac{\sin \frac{M}{2}\omega_i}{\sin\frac{\omega_i}{2}}& &,k=0,\\
\sum^P_{i=1}\frac{\gamma_i\sqrt{\frac{2}{M}}}{\sqrt{P}}\Bigg(\frac{\sin(\small(\omega_i+\frac{\pi k}{M}\small)\frac{M}{2})}{\sin(\small(\omega_i+\frac{\pi
k}{M}\small
)\frac{1}{2})}e^{j\frac{(\omega_i+\frac{\pi k}{M})(M-1)}{2}+\frac{\pi k}{M}}+\frac{\sin(\small(\omega_i-\frac{\pi k}{M}\small)\frac{M}{2})}{\sin(\small(\omega_i-\frac{\pi
k}{M}\small
)\frac{1}{2})}e^{j\frac{(\omega_i-\frac{\pi k}{M})(M-1)}{2}-\frac{\pi k}{M}}\Bigg)
& &,0\leq k \leq M-1.
\end{array}\end{cases}
\end{eqnarray}
\end{minipage}
\vspace{0.2cm}\\
\hline
\hline
\end{tabular}
\vspace{-0.5cm}
\end{figure*}

 \begin{lemma}\cite{transform}
 The SCN of the exponential correlation matrix eq.(\ref{exp}) after the DCT is such that
 \begin{eqnarray}
 \small\lim_{M\rightarrow \infty}\chi\text{}(\mathbf{U}_D\mathbf{R}{\mathbf{U}_D}^H)=1+\rho.
 \end{eqnarray}
 \end{lemma}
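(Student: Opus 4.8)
Since $\mathbf{U}_D$ is the eigenbasis of the tridiagonal matrix $\mathbf{Q}_c$, my plan is to route the whole computation through $\mathbf{Q}_c$ and its relation to $\mathbf{R}^{-1}$, because that is what makes the DCT ``almost diagonalize'' the exponential model. First I would write $\mathbf{R}^{-1}$ explicitly: for (\ref{exp}) it is tridiagonal, with interior diagonal $\frac{1+\rho^{2}}{1-\rho^{2}}$, off-diagonals $\frac{-\rho}{1-\rho^{2}}$, and the two corner diagonal entries equal to $\frac{1}{1-\rho^{2}}$. Setting this side by side with $\mathbf{Q}_c$ (interior diagonal $1$, off-diagonal $-\zeta=-\frac{\rho}{1+\rho^{2}}$, corners $1-\zeta$) gives the clean identity
\[
(1+\rho^{2})\,\mathbf{Q}_c-(1-\rho^{2})\,\mathbf{R}^{-1}=\mathbf{E},
\]
where $\mathbf{E}$ is supported only on the two corners, $\mathbf{E}=-\rho(1-\rho)\big(\mathbf{e}_1\mathbf{e}_1^{T}+\mathbf{e}_M\mathbf{e}_M^{T}\big)$, with $\mathbf{e}_1,\mathbf{e}_M$ the first and last standard basis vectors. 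Thus $\mathbf{Q}_c$ equals $\frac{1-\rho^{2}}{1+\rho^{2}}\mathbf{R}^{-1}$ up to a rank-two boundary term, which is the algebraic reason the DCT compresses $\mathbf{R}$.

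A caveat I would flag at the outset: $\mathbf{U}_D$ is orthogonal, so the similarity $\mathbf{U}_D\mathbf{R}\mathbf{U}_D^{H}$ has exactly the spectrum of $\mathbf{R}$ and cannot change $\chi$. The reduction claimed here must therefore be read in the DCT-domain (preconditioned) sense: with $\operatorname{diag}(\mu_k)=\mathbf{U}_D\mathbf{Q}_c\mathbf{U}_D^{H}$ collecting the eigenvalues $\mu_k=1-2\zeta\cos\frac{k\pi}{M}$ of $\mathbf{Q}_c$, the relevant operator is $\operatorname{diag}(\mu_k)\,\mathbf{U}_D\mathbf{R}\mathbf{U}_D^{H}$, which is similar to $\mathbf{Q}_c\mathbf{R}$. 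I would prove the statement by computing $\chi(\mathbf{Q}_c\mathbf{R})$.

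The decomposition above turns this into a rank-two spectral problem. Substituting,
\[
\mathbf{Q}_c\mathbf{R}=\frac{1}{1+\rho^{2}}\big[(1-\rho^{2})\mathbf{R}^{-1}+\mathbf{E}\big]\mathbf{R}=\frac{1-\rho^{2}}{1+\rho^{2}}\mathbf{I}+\frac{1}{1+\rho^{2}}\mathbf{E}\mathbf{R}.
\]
Since $\mathbf{E}$ has rank two, $\mathbf{Q}_c\mathbf{R}$ is a scalar matrix plus a rank-two perturbation, so it carries the eigenvalue $\frac{1-\rho^{2}}{1+\rho^{2}}$ with multiplicity $M-2$ and two further eigenvalues coming from the $2\times2$ compression of $\mathbf{E}\mathbf{R}$ onto $\operatorname{span}\{\mathbf{e}_1,\mathbf{e}_M\}$. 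Using that the first and last rows of $\mathbf{R}$ meet the corners in the entries $R_{11}=R_{MM}=1$ and $R_{1M}=R_{M1}=\rho^{M-1}$, that compression is $-\rho(1-\rho)\left[\begin{smallmatrix}1&\rho^{M-1}\\ \rho^{M-1}&1\end{smallmatrix}\right]$, with eigenvalues $-\rho(1-\rho)(1\pm\rho^{M-1})$; both tend to $-\rho(1-\rho)$ as $M\to\infty$. Hence the two stray eigenvalues of $\mathbf{Q}_c\mathbf{R}$ converge to $\frac{1-\rho^{2}-\rho(1-\rho)}{1+\rho^{2}}=\frac{1-\rho}{1+\rho^{2}}$.

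It remains to take the ratio of extremes. For $0<\rho<1$ the bulk value $\frac{1-\rho^{2}}{1+\rho^{2}}$ exceeds the stray value $\frac{1-\rho}{1+\rho^{2}}$, so
\[
\lim_{M\to\infty}\chi(\mathbf{Q}_c\mathbf{R})=\frac{(1-\rho^{2})/(1+\rho^{2})}{(1-\rho)/(1+\rho^{2})}=\frac{1-\rho^{2}}{1-\rho}=1+\rho,
\]
which is the asserted limit. The main obstacle is the boundary step: one must pin down the exact corner entries of $\mathbf{R}^{-1}$ so that $\mathbf{E}$ is genuinely rank two (any error there leaks into the bulk and destroys the clean answer), and then track the two perturbed eigenvalues, where the decaying coupling $\rho^{M-1}$ is exactly what forces the passage to the limit $M\to\infty$. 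Everything else is routine rank-two eigenvalue bookkeeping.
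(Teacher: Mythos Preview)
Your argument is correct. Note that the paper does not supply its own proof here---the lemma is quoted from \cite{transform}---so there is no in-paper derivation to compare against. Your route, via the identity $(1+\rho^{2})\mathbf{Q}_c=(1-\rho^{2})\mathbf{R}^{-1}+\mathbf{E}$ with a rank-two boundary term $\mathbf{E}$, and the consequent reduction of $\mathbf{Q}_c\mathbf{R}$ to a scalar multiple of the identity plus a rank-two perturbation, is precisely the mechanism used in the cited source: it is the standard transform-domain adaptive-filtering argument for why the DCT approximately diagonalizes the exponential (AR(1)) correlation.

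Your caveat is also well placed and worth keeping explicit. Because $\mathbf{U}_D$ is orthogonal, $\chi(\mathbf{U}_D\mathbf{R}\mathbf{U}_D^{H})=\chi(\mathbf{R})$ identically for every $M$, so the lemma as literally written cannot hold; the $1+\rho$ limit pertains to the power-normalized (preconditioned) transform-domain operator, which is exactly what your computation of $\chi(\mathbf{Q}_c\mathbf{R})$ captures and what \cite{transform} intends.
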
\smallskip

The change of SCN from $\frac{1+\rho}{1-\rho}$ to $1+\rho$ before and after DCT transformation explains
the compressive nature of DCT.

The correlation matrix can be re-implemented as
$\mathbf{T}_{lc}=\mathbf{E}[\mathbf{m}^d_{lc}{\mathbf{m}^d_{lc}}^H]=\mathbf{A}_D\mathbf{R}_{lc}\mathbf{A}_D^{H}$, which is a 2-dimensional DCT of the $\mathbf{R}_{lc}$. Decontaminating the training sequences from the interference requires taking into the account to the following important
points\smallskip
\begin{itemize}
\item As most information is condensed in the certain frequencies, these frequencies should face the lowest possible interference in order to obtain an accurate estimate. The spatial frequencies are functions of angles of
arrival of pilots at the BS, thus condensing these spatial frequencies in
certain bands makes the separation process easier. This kind of separation can be enhanced by prearranging
the spatial frequencies through performing training
sequence allocation which is discussed later in section \ref{allocation}. \smallskip
\item The users who have the least common spatial characteristics should be assigned the same sequence; as they have the minimum overlap in the spatial frequency domain.\smallskip
\end{itemize}

The compression pattern of DCT is defined as a function of the power distribution
among the spatial frequencies. Such distributions depend on the nature of
the angular spread at the BS. To enable the spatial separation among the different estimate, the pattern of compression should be highlighted. For
this purpose, we
adopt the correlation models of \cite{per} to formulate generic rules
about DCT compression for ULA systems. \smallskip 
 
\begin{lemma}
\label{L1}
The spatial DCT frequencies are concentrated at low frequencies if angle of arrivals
are close to zero. If the direction of the arrivals are close to $\frac{\pi}{2}$,
the DCT components are condensed at high frequencies.
\end{lemma}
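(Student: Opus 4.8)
The plan is to work directly from the closed-form DCT of the array response in (\ref{array}) and to read off the location of its dominant coefficient as a function of $\theta$. The starting observation is that each of the two summands in the $0\le k\le M-1$ branch of (\ref{array}) is a Dirichlet kernel $D_M(x)=\sin(Mx/2)/\sin(x/2)$ evaluated at $x=\omega_i-\tfrac{\pi k}{M}$ and at $x=\omega_i+\tfrac{\pi k}{M}$, respectively; this is exactly what one obtains by expanding the cosine in the DCT definition (\ref{dct}) into two complex exponentials and summing the resulting geometric series against $\mathbf{a}(\omega_i)$. Since $D_M$ has magnitude $M$ at the origin and decays away from it, $|\mathbf{a}^{\text{DCT}}[k]|$ peaks where one of the two arguments vanishes. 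Taking $\theta\in[0,\tfrac{\pi}{2}]$ so that $\omega_i\ge 0$ and restricting to $k\ge 0$, only the argument $\omega_i-\tfrac{\pi k}{M}$ can vanish, which pins the peak index to
\[
k^\star=\frac{M\omega_i}{\pi}=\frac{2Md\sin\theta}{\lambda},
\]
after substituting $\omega_i=\tfrac{2\pi d\sin\theta}{\lambda}$ from the channel model.

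From this the two regimes of the lemma follow by inspection. As $\theta\to 0$ we have $\omega_i\to 0$ and hence $k^\star\to 0$, so the mass of $\mathbf{a}^{\text{DCT}}$ collapses onto the lowest spatial-frequency indices; as $\theta\to\tfrac{\pi}{2}$ we have $\sin\theta\to 1$, so $\omega_i$ reaches its largest value $\tfrac{2\pi d}{\lambda}$ and $k^\star$ is pushed to the upper end of the admissible range $0\le k\le M-1$, placing the concentration at the high spatial frequencies. To promote this from the single-path response to the channel itself, I would invoke linearity of the DCT: the multipath channel (\ref{multipath}) is a $\gamma_i$-weighted sum of array responses, so (\ref{arraydct}) is the same weighted sum of the per-path DCTs, and each path deposits its energy near its own $k^\star(\theta_i)$. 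When the angular support of the paths is clustered near $0$ or near $\tfrac{\pi}{2}$, the aggregate energy of $\mathbf{h}^{\text{DCT}}_{lc}$ is correspondingly clustered at low or high $k$.

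The step I expect to be the real obstacle is making the word \emph{concentrated} quantitative, since a Dirichlet kernel is not a point mass and carries sidelobes. The clean route is to write $\beta=\tfrac{\pi}{M}(k-k^\star)$ and use the bound $|D_M(\beta)|\le\min\!\big(M,\pi/|\beta|\big)$, which gives $|\mathbf{a}^{\text{DCT}}[k]|^2\lesssim M/|k-k^\star|^2$ away from the peak after accounting for the normalizer $c[k]=\sqrt{2/M}$. Combining this with Parseval for the (unitary) DCT, $\sum_k|\mathbf{a}^{\text{DCT}}[k]|^2=\|\mathbf{a}(\omega_i)\|^2=M$, shows that the fraction of total energy lying outside a window $|k-k^\star|>\Delta$ is $O(1/\Delta)$, uniformly in $M$; hence for fixed $\Delta$ the band around $k^\star$ captures a constant share of the energy while occupying a vanishing fraction of the $M$ indices as $M$ grows, which is the precise sense of the claimed concentration. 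A secondary subtlety arises at the two extremes: when $\omega_i\to 0$ the $\pm$ lobes centered at $\pm k^\star$ merge at $k=0$, and when $\omega_i$ approaches the band edge the lobe folds against $k=M-1$; in both boundary cases the claim should be verified by direct evaluation of (\ref{array}) rather than by a well-separated main-lobe argument.
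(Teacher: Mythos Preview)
Your argument is correct and is a genuinely different route from the paper's own proof. The paper argues statistically: it fixes a spatial frequency index $k$, takes the expectation $\mathbb{E}_\theta[\mathbf{h}^d_{lc}[k]]$ over a uniform angular distribution on $[\theta_1,\theta_2]$, passes to the variable $\omega$ (and for $k\ne 0$ to the shifted variables $x_{1,2}=c\sin\theta\mp k\pi/M$), and then argues case by case that the resulting integrals are ``close to zero'' or ``a considerable constant'' depending on whether the angular support sits near $0$ or near $\pi/2$. In other words, the paper shows that the \emph{mean} DCT coefficient is large at low $k$ for small angles and at high $k$ for large angles. You instead work deterministically from the Dirichlet-kernel structure of (\ref{array}), locate the peak index $k^\star=M\omega_i/\pi$, and control the off-peak energy via $|D_M|\le\min(M,\pi/|\beta|)$ together with Parseval.

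What each buys: your approach gives an explicit peak location and a quantitative energy-concentration statement (fraction of energy outside a window of width $\Delta$ around $k^\star$ is $O(1/\Delta)$), and it extends to the multipath channel by linearity without any distributional assumption on $\theta$; it also makes transparent that the conclusion ``$\theta\to\pi/2$ pushes $k^\star$ to the top of the index range'' implicitly relies on the antenna spacing satisfying $2d/\lambda\lesssim 1$ (half-wavelength spacing gives $k^\star\to M$), which you may want to state. The paper's expectation argument is less sharp but ties the claim directly to the stochastic angular-spread model used elsewhere in the manuscript; on the other hand, bounding the mean of $\mathbf{h}^d_{lc}[k]$ is weaker than bounding its energy, and the paper's integral estimates are asserted rather than proved, so your version is the more complete of the two.
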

\begin{proof}
See Appendix.\smallskip
\end{proof}

 The previous Lemma \ref{L1} confirms the intuition that the contamination
is minimum when the important frequencies of the users who allocate the same
training sequence are concentrated at different spatial band. The compression efficiency enhances with increasing the number of antennas at the BS, which
makes the separation much easier.     

\begin{lemma}
As $M\rightarrow\infty$, the DCT response is condensed in a single spatial
frequency component.
\end{lemma}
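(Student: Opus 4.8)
The plan is to work directly from the closed-form DCT of the array response in (\ref{array}) and to track, as $M$ grows, where its energy accumulates on the spatial-frequency axis. Each of the two summands in (\ref{array}) is a Dirichlet-type kernel $\frac{\sin((\omega_i\pm\frac{\pi k}{M})\frac{M}{2})}{\sin((\omega_i\pm\frac{\pi k}{M})\frac{1}{2})}$, which attains its peak value $M$ exactly when its argument vanishes and decays away from it. First I would identify the dominant bin $k^{\star}=\frac{M\omega_i}{\pi}$, at which the relevant kernel argument is zero (the other summand then contributes only a far sidelobe, centered at the mirror bin $-k^{\star}$). Evaluating the peak gives $|\mathbf{a}^{\text{DCT}}[k^{\star}]|^2=\Theta(M)$, so a single bin already carries an order-one share of the total energy $\|\mathbf{a}^{\text{DCT}}\|^2=\|\mathbf{a}(\omega_i)\|^2=M$, the last equality following from orthonormality of $\mathbf{U}_D$ in (\ref{dctm})--(\ref{dctn}).

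The second step, which is where the condensation becomes genuine, is to pass to the normalized spatial frequency $\nu=k/M\in[0,1)$, so that the peak sits at $\nu^{\star}=\omega_i/\pi$ (consistently with Lemma \ref{L1}, since $\nu^{\star}=\sin\theta_i$ for half-wavelength spacing). Writing the bin offset $m=k-k^{\star}$, the samples of the dominant kernel reduce to $\frac{\sin(\pi m/2)}{\sin(\tfrac{\pi m}{2M})}$, which vanish for even $m\neq0$ and equal $\approx\frac{2M}{\pi m}$ for odd $m$; hence $|\mathbf{a}^{\text{DCT}}[k^{\star}+m]|^2$ is a fixed fraction $\propto 1/m^2$ of the total, \emph{independent} of $M$. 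The key estimate is then the tail bound $\sum_{|m|>\epsilon M}|\mathbf{a}^{\text{DCT}}[k^{\star}+m]|^2\big/\|\mathbf{a}^{\text{DCT}}\|^2=O\!\big(\tfrac{1}{\epsilon M}\big)$, obtained by summing this $1/m^2$ profile. Because every contributing offset lies at $\nu=\nu^{\star}+m/M$, this says that for any fixed window $(\nu^{\star}-\epsilon,\nu^{\star}+\epsilon)$ the share of energy falling outside it tends to zero.

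Combining the two steps, the normalized energy distribution of the DCT response converges to a point mass at the single frequency $\nu^{\star}=\omega_i/\pi$ as $M\to\infty$, which is the precise meaning of ``condensed in a single spatial frequency component''; for the multipath response (\ref{arraydct}) the same argument applies termwise, so each of the $Q$ paths contributes one such condensed component. The step I expect to be the main obstacle is exactly the passage to normalized frequency: in the raw bin index the main lobe spreads over a constant number of bins and does \emph{not} shrink to one, so ``the peak dominates'' is false at the level of individual bins, and the concentration is visible only after rescaling by $M$. Making this rigorous therefore rests on the sidelobe tail estimate above rather than on a pointwise peak-versus-neighbor comparison.
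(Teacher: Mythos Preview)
Your argument is sound and, in several respects, more complete than the paper's own treatment. The paper proceeds differently: it fixes the two extreme regimes $\theta_s\approx 0$ and $\theta_s\approx \pi/2$, takes the expectation $\mathbb{E}_\omega[|\mathbf{a}^{\text{DCT}}[k]|]$ over the angular spread, and then argues from the explicit formulas (\ref{unif})--(\ref{4}) that only the corresponding extremal bin ($k=0$, respectively $k=M-1$) survives in the limit while all others tend to zero. In other words, the paper averages first and then sends $M\to\infty$ for two boundary configurations, rather than establishing pointwise energy condensation at an arbitrary $\omega_i$.

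Your route---working deterministically from (\ref{array}), invoking Parseval through the orthogonality of $\mathbf{U}_D$, and passing to the normalized frequency $\nu=k/M$ so that the Dirichlet sidelobe decay becomes a genuine concentration statement---handles every $\omega_i\in(0,\pi)$ uniformly and makes precise what ``condensed in a single component'' means (weak convergence of the normalized energy profile to $\delta_{\nu^\star}$). The paper's approach, by contrast, stays closer to the physical model: by averaging over the angular distribution it speaks directly to the multipath channel (\ref{arraydct}) with a continuum of arrival angles, whereas your termwise extension to $Q$ discrete paths produces $Q$ point masses rather than one. Both arguments support the lemma as stated; yours is the sharper analytic claim, the paper's is the one tied to the correlation models used downstream.
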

\begin{proof}
See Appendix.
\end{proof}
In the next section, we utilize the DCT compression to deal with the problem of contaminated estimation in correlated multiuser environment.
\vspace{-0.25cm}
\section{Channel estimation with $\mathcal{K}$- training sequence reuse}
Exploiting the ULA structure, we develop a new estimator with the aim of decontaminating the reused training sequences over the network. Our estimators utilize the embedded information in the second order statistics of the channel vectors. The  covariance matrices capture the embedded information related to the distribution (mainly mean and spread) of the multi-path angles of arrival at the base station \cite{kammeryer}. In \cite{emil-o}-\cite{gershman}, the authors focus on optimal training sequence
designs
and they exploit the covariance matrices of the desired channels and interference. The optimal training sequences are developed with adaptation to the statistics of the disturbance\cite{emil-o}. An extension of \cite{emil-o} is proposed in \cite{emil-t}, which proposes a more general framework for the purpose
of training sequence design in MIMO systems, which handles not only minimization of channel estimator’s MSE as an optimization metric, but
also the optimization of a final performance metric of interest
related to the use of the channel estimate in the communication
system. However, the design of the training sequence does not  have an impact on  interference mitigation, as long as, we utilize fully aligned training
sequences. Here, we concentrate on designing an estimation technique that can achieve accurate results, by exploiting the spatial frequency to mitigate the interference in CSI estimation process.

\vspace{-0.2cm}
\subsection{ Bayesian Estimation}
\vspace{-0.1cm}
The Bayesian Estimator (BE) is widely discussed in the literature and is utilized in many applications \cite{mackay}. The BE coincides with the minimum mean square estimator in case $\mathbf{h}_{lc}$ and $\mathbf{y}_c$ are jointly Gaussian distributed random variables. The BE estimator can be formulated
as \cite{mackay}

 \begin{equation}
\label{mmse}
\mathbf{F}_{lc}=\mathbf{G}_{lc}\mathbf{S}^H=\mathbf{R}_{lc}\bigg(\displaystyle \bigg(\sum_{l=1}^{C}\mathbf{R}_{lc}\bigg)+\frac{{\sigma_n}^2}{\tau}\mathbf{I}_{M}\bigg)^{-1}\mathbf{{S}}^H.
\end{equation}
The previous formulation provides insights about the nature of the training
sequence contamination; the estimator is a function of all the correlation matrices related to all users who utilize the same sequence. Therefore, the spatial characteristics of all users, who have the same sequence, influence on the accuracy of the estimations.

\subsubsection{Minimum Mean Square Error Performance }
The considered performance is the mean square error (MSE) of the proposed estimator, and can be expressed as\cite{mackay}
\begin{equation}
\label{mse1}
\mathcal{E}_{lc}=\mathbb{E}_{\mathbf{h}_{lc}}\Bigg\{\|\mathbf{\hat{h}}_{lc}-\mathbf{h}_{lc}\|^2|\mathbf{\hat{h}}_{lc}\Bigg\}.
\end{equation}

The final formulation for MSE can be expressed as
\begin{equation}
\label{mse}
\mathcal{E}_{lc}=tr\bigg(\mathbf{R}_{lc}-\mathbf{R}_{lc}^2\Big(\displaystyle\sum_{m=1}^C\mathbf{R}_{mc}+\frac{\sigma_n^2}{\tau}\mathbf{I}_{M}\Big)^{-1}\bigg).
\end{equation}
The upper bound of the MSE of the BE can be written as\smallskip
\begin{eqnarray}
\mathcal{E}^{MI}_{lc}=tr\bigg(\mathbf{W}_{lc}\Big(C\mathbf{\Delta}_{lc}+\frac{\sigma^2_n}{\tau}\mathbf{I}_M\Big)^{-1}\mathbf{W}^H_{lc}\bigg)
\end{eqnarray}
where $\mathbf{R}_{lc}$ is decomposed as $\mathbf{W}_{lc}\mathbf{\Delta}_{lc}\mathbf{W}^H_{lc}$,
and the subscript $MI$ refers to the ``maximum interference scenario". This
scenario occurs when the set of users, who  utilizes the same training sequence, has identical spatial second order statistic and attenuation towards certain
BS. \smallskip

The lower bound of the MSE  of the BE is  as:
\begin{equation}
\mathcal{E}^{NI}_{lc}=tr\bigg(\mathbf{R}_{lc}\Big(\mathbf{R}_{lc}+\frac{\sigma_n^2}{\tau}\mathbf{I}_{M}\Big)^{-1}\bigg)
\end{equation}
where superscript $NI$ refers to the ``no interference scenario". This lower
bound can be achieved when the users span distinct subspaces and this condition should be satisfied $\{\mathbf{W}_{lc}\mathbf{W}_{lj}=\mathbf{0}
, \forall j\neq c \}$.\smallskip

 Therefore, the overlap in these subspaces will
degrade the estimate. A new look to the problem will be handled through the DCT framework in the next sections. As the work in this paper aims at minimizing the estimation errors to reduce the contamination, we do not consider beamforming which is handled in \cite{jose}. Taking this into the account, we only consider conventional beamforming techniques:
\begin{itemize}
\item Coordinated Beamforming (CB) requires the estimation of the intefering channels. The contamination occurs when the same BS assigns the same sequence to estimate the served user channel as well as interfering channels and/or other BSs use the same training to estimate their users' channel or the corresponding interfering channels. In this scenario, the contamination can
be utilized to estimate the interfering channel for the user that utilizes the same training
sequence.
\item Maximum ratio transmission (MRT), which requires the estimation of the desired user channel. The contamination happens when the base stations  utilize the same training sequence for their users.\smallskip
\end{itemize}
\vspace{-0.3cm}
\subsection{Least Square Estimation}
Least square estimation can be utilized in the scenarios when the information
about the second order statistics is not available.
Hence, if the
received signal is modeled as (\ref{rx}), a least square (LS) estimator for the desired channel can be formulated as\smallskip
\begin{eqnarray}
\vspace{-0.2cm}
\label{ls1}
\mathbf{\hat{h}}^{\text{ls}}_{lc}=\mathbf{S}^H\mathbf{y}_c.
\end{eqnarray}
The conventional estimator suffers from a lack of orthogonality
between the desired and interfering training sequences, an effect known as
training sequence contamination \cite{jose},\cite{larsson}-\cite{Jindal}. In particular, when the same
training sequence is reused in all cells,
the estimated channel can be expressed as
\begin{eqnarray}
\label{ls}
\vspace{-0.2cm}
\mathbf{\hat{h}}^{\text{ls}}_{lc}=\mathbf{h}_{lc}+\sum^C_{m\neq c}\mathbf{h}_{mc}+\frac{\mathbf{S}^H\mathbf{N}}{\tau}.
\end{eqnarray}
As it appears in (\ref{ls}), the interfering channels have strong impact and leak directly
into the desired channel estimate. The estimation performance
is then limited by the signal to interfering ratio at the base
station, which consequently limits the ability to design effective beamforming solutions.\smallskip
\subsubsection{Mean Square Error Performance}
 The MSE for LS can derived as (\ref{mse1}), and the closed form
expression can be formulated as
\begin{eqnarray}
\vspace{-0.5cm}
\label{msels}
\mathcal{E}^{LS}_{lc}=tr\bigg(\sum^C_{m=1,m\neq l}\mathbf{R}_{mc}\bigg).
\end{eqnarray}
From (\ref{msels}), it can be noted that MSE is a linear function of the involved correlation
matrices traces, and it grows linearly with the number of users who utilizes the
same training sequence. 
\section{Estimation techniques using DCT compression}

\subsection{Important Frequencies Determination}
In order to decontaminate the channel estimate, we need to determine the important
frequencies that should be extracted from the estimated channel $\mathbf{S}^H\mathbf{y}_c$. To determine these important frequencies, we need to solve the following quadratic
form
\vspace{-0.3cm}
\begin{eqnarray}
\omega^*_i=\underset{\omega_i^*}{\arg\max}\quad\mathbf{a}_{\text{D}}(\omega_i)\mathbf{R}_{lc}\mathbf{a}^H_{\text{D}}(\omega_i).
\end{eqnarray}
The solution lies in finding the eigenvectors related to the maximum eigenvalues.
 Therefore, the concentration of the important spatial frequencies depend on the angular spread at the BS. 
\smallskip
In order to separate the contamination from the required estimate, we need
to determine the number of the important frequencies. 
It should be noted that the number of considered spatial frequency depends
on the tradeoff  between the loss resulted from contamination
and DCT lossy compression. We define
the vector $\mathbf{q}_{lc}\in\{0,1\}^{M\times 1}$, which identifies the unimportant and important frequencies of $\mathbf{h}_{lc}$ as 0 and 1 respectively. A definition of the compression ratio of can be stated as
\vspace{-0.3cm}
\begin{eqnarray}
\vspace{-0.5cm}\eta=\frac{\|\mathbf{q}_{lc}\|_0}{M}.
\end{eqnarray}

 In the next section, integrated DCT-LS and DCT-BE frameworks are proposed to deal with the problem of contaminated estimation in correlated multiuser environment.
\vspace{-0.4cm}
\subsection{DCT Based Bayesian Estimation (DBE)}
As the received signal is a combination of all users' channel
who utilize the training sequence, the DCT of the received signal contains
the spatial frequencies related to these channels compressed in certain components.
Therefore, a splitting technique in DCT domain is applicable if we know the most important frequencies related to the required estimate. To reduce the
impact of contamination, we can extract these frequencies and replace the
least important frequencies by their average values, which are close to zero. This can be explained by Fig.(\ref{mdct})-a (the upper figure), it depicts the splitting capability of DCT by distinguishing the
important information of the involved estimate, which is clearly condensed at
high and low frequencies. For the first estimate (the black one), the information
is compressed in the low spatial frequency components while the high spatial
frequencies
do not hold much information. It can be noticed that these
spatial frequencies are contaminated by other channels (the blue and the
pink ones). This contamination can be tackled by removing these spatial frequencies
and equating them to zero.

\begin{figure}[h]
\vspace{-0.4cm}
\begin{center}
\includegraphics[scale=0.55]{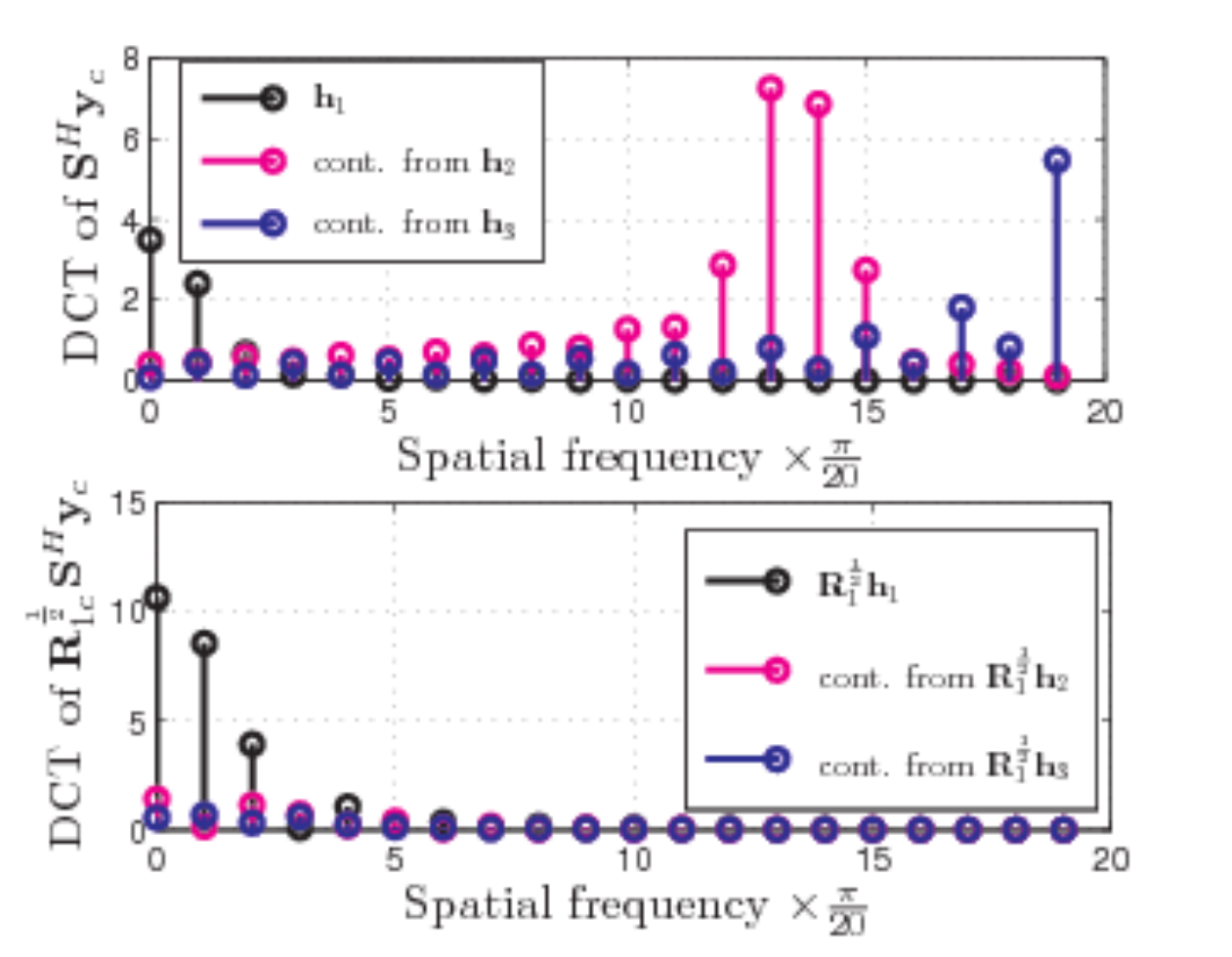}
\vspace{-0.4cm}\caption{\textit{\label{mdct}\small A comparison of DCT response
employing before and after multiplication by $\mathbf{R}^{\frac{1}{2}}_{lc}$ `cont' denotes contamination.}}
\end{center}
\end{figure}
The new estimation technique should take into consideration the covariance
information and DCT compression capability to boost the performance of BE.
The new estimation procedure can be described as following \smallskip

\footnotesize
\begin{center}
\begin{tabular}{p{8.3cm}}
\hline\hline
\textbf{A1.} DCT based Bayesian Estimation (DBE)\\
\hline
\begin{itemize}
\item At user's side: Send the training sequence $\mathbf{s}$.
\item At BS side:
\begin{enumerate}
\item Modify the covariance matrices in ${\mathbf{G}}_{lc}$ using $\hat{\mathbf{R}}_{lc}$,
which can be formulated as follows: 
\begin{eqnarray}\nonumber
\mathbf{\hat{U}}_D&=&\mathbf{U}_D\odot\big(\mathbf{q}_{lc}\otimes\mathbf{1}_{1\times
M}\big)\\\nonumber
\mathbf{U}_D\mathbf{\hat{R}}_{lc}\mathbf{U}^H_D&=&\mathbf{\hat{U}}_D\mathbf{R}_{lc}\mathbf{\hat{U}}^H_D
\end{eqnarray}
\item Find $\mathbf{\hat{y}}_c=\mathbf{S}^H\mathbf{y}_c$, extract the important
information related to the estimate $\tilde{\mathbf{y}}_c=(\mathbf{U}_D\hat{\mathbf{y}}_c)\odot\mathbf{q}_{lc}$.
\item Take the inverse DCT of $\tilde{\mathbf{y}}_c$ as $\mathbf{U}^H_D\tilde{\mathbf{y}}_c=\mathbf{y}^f_c$.
\item Find $\hat{\mathbf{h}}_{lc}=\mathbf{G}_{lc}\mathbf{y}^f_c$.  
\end{enumerate}
\end{itemize}\\
\hline
 \end{tabular}

 \end{center}
 \normalsize

\smallskip
\subsection{DCT Based Least Squares Estimation (DLS)}
\vspace{-0.1cm}
In comparison
with typical LS, in which the estimation results in direct summation of all
channel (\ref{ls}), the DCT Based LS (DLS) requires information about the important
frequency set for each estimated channel. The estimation
can be summarized as follows\\

\footnotesize
\begin{center}
\begin{tabular}{p{8.3cm}}
\hline\hline
\textbf{A2.} DCT based Least Square Estimation (DLS)\\
\hline
\begin{itemize}
\item At the user's terminals: Send the training sequence $\mathbf{s}$
\item At the BS's side:
\begin{enumerate}
\item Employ typical LS on the received signal, find $\hat{\mathbf{h}}^{ls}_{lc}$.
\item Employ DCT on $\hat{\mathbf{h}}^{ls}_{lc}$, extract the important DCT frequencies
related to estimate by $\tilde{\mathbf{h}}^{ls}_{lc}=(\mathbf{U}_D\hat{\mathbf{h}}^{ls}_{lc})\odot\mathbf{q}_{lc}$.\smallskip
\item Take the inverse DCT of the extracted version $\mathbf{U}^H_D\tilde{\mathbf{h}}^{ls}_{lc}$.\smallskip
\end{enumerate}
\end{itemize}\\
\hline
\end{tabular}
\end{center}
\normalsize

This estimation technique can be combined with a training sequence allocation algorithm
to make the most important spatial frequencies distinct which simplifies
the separation of these frequencies.
 The DLS can be used to estimate the channel for MRT
beamforming since it only requires the statistical information of the
target channel.\smallskip

The major difference between DBE and DLS is that DBE decontaminates the training
sequences
using two steps: zeroing the unimportant frequencies, and the BE step to remove the contamination from the important frequencies. BE and DBE require the covariance acquisition
of the direct and all interfering links. Another complexity source is the  matrix inversion. Thereby, their usage in large scale MIMO systems is limited
and constrained to the limited size problems. On the other hand,
DLS only requires the covariance knowledge of the target channel. It has less complexity due to the absence of a matrix inversion step and the acquisition
of the interfering links' covariance matrices. 
\vspace{-0.2cm}  
\section{Modified Spatial Estimation}
\label{modified}

To enhance the performance of BE, DBE and DLS, the important frequencies should
be more distinguishable. This can be obtained by compressing them into smallest
possible number of components to separate them more effectively. Therefore, a modification can be proposed
to handle this issue using the available correlation information and DCT.

The concept of increasing the correlation of the channel of interest
is proposed  in this section. This makes the target channel more ill-conditioned
and thereby the contamination more separable. Therefore, to estimate $\mathbf{h}_{lc}$
we can increase its correlation by multiplying the received signal $\mathbf{y}_c$
in (\ref{yc}) by $\mathbf{R}^{\frac{i}{2}}_{lc}\mathbf{S}^H$,
which makes the correlation of the target estimate $\mathbf{R}^{i+1}_{lc}$
and the correlation of contamination $\mathbf{R}^{\frac{i}{2}}_{lc}\mathbf{R}_{mc}\mathbf{R}^{\frac{i}{2}}_{lc}$.

The effect of implementing such step for $i=1$ is depicted in Fig.(\ref{mdct})-b. In the
considered scenario, $\theta_s=\{0^{\circ},15^{\circ},35^\circ\}$
are the lower limit of the angular spread for each channel respectively, $\Delta\theta=20^{\circ}$ is the angular spread. It can be noted that after employing this step, the impact of contamination in the DCT domain is
limited and does not have any influential contribution to the estimate and
can be removed easily using DCT. Taking into the account this property, we
propose two estimation techniques as follows     
 \begin{figure}[h]
\vspace{-0.1cm}
\begin{center}
\vspace{-0.1cm}
\includegraphics[scale=0.4]{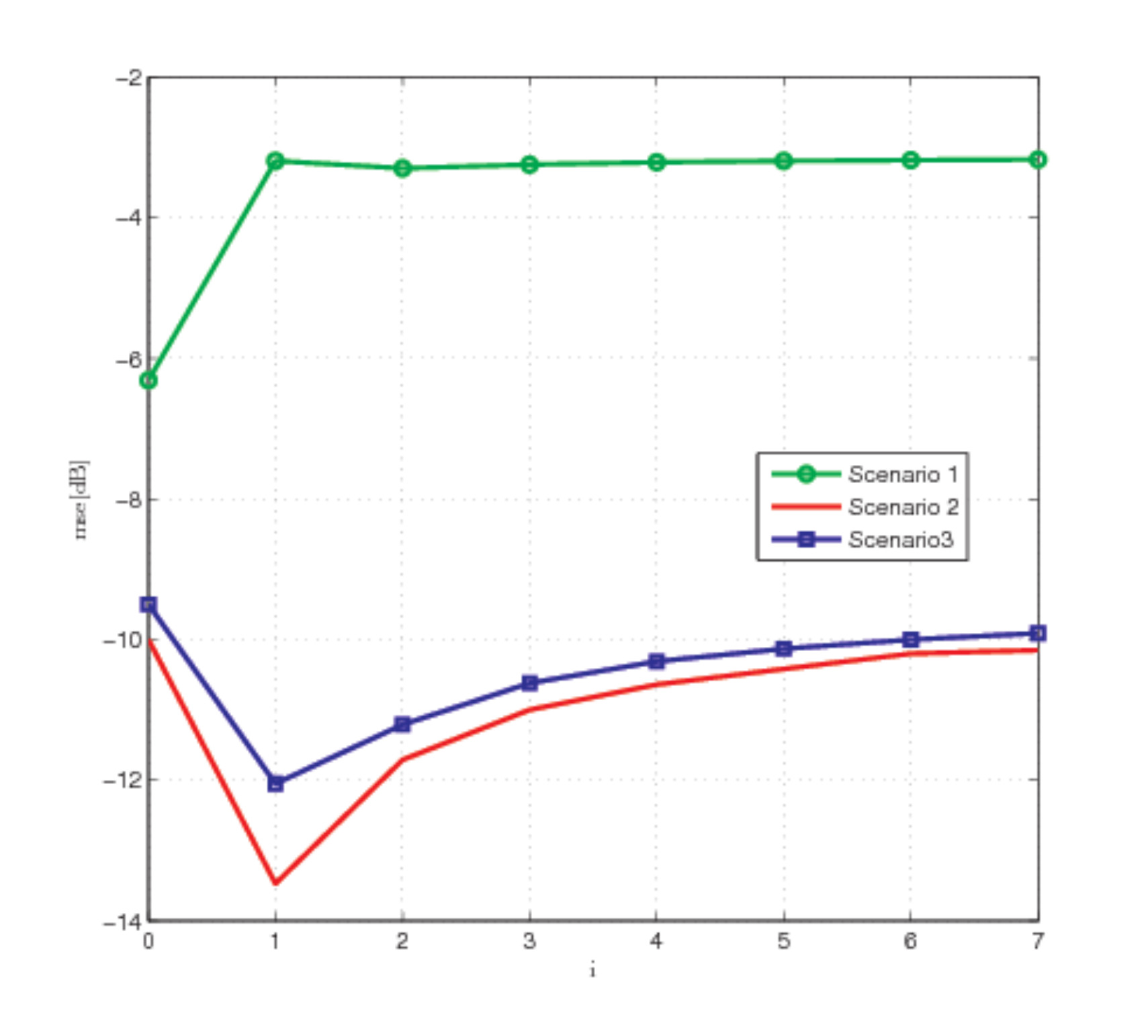}
\vspace{-0.4cm}\caption{\label{ant}\label{variants}\textit{\small The normalized MSE vs.
$\mathbf{R}^{\frac{i}{2}}_{lc}$ at different correlation scenarios, $K=3$,
$\Delta\theta=20^\circ$. Scenario
1:$\theta_s=[0^{\circ},20^{\circ}, 40^{\circ}]$ , Scenario 2:$\theta_s=[0^{\circ},15^{\circ}, 30^{\circ}]$, Scenario 3:$\theta=[0^{\circ}, 10^{\circ}, 20^{\c3irc}]$ }}
\end{center}
\end{figure}
\vspace{-0.1cm}
\subsection{Modified BE Based Estimation (MBE)}
\textcolor{black}{The   multiplication by $\mathbf{R}^{\frac{i}{2}}_{lc}$ increases the correlation
of the target estimate,
the modified correlation can be decomposed as $\mathbf{W}_{lc}\mathbf{\Delta}^{i
+1}_{lc}\mathbf{W}^H_{lc}$. This makes eigenvalues of the target estimate
more distinct since the stronger eigenvalue becomes more effective and the
opposite holds for the weaker ones. Therefore, the channel  becomes more ill-conditioned and
more separable. However, the impact of this multiplication is unknown with
respect to the interfering subspaces and heuristic approach is proposed to
gain the benefit of it when it is applicable. After the multiplication, the received signal can
be written as
\begin{eqnarray}
\mathbf{y}^{'}_c=\gamma_{lc}\mathbf{R}^{\frac{i}{2}}_{lc}\mathbf{S}^H\mathbf{y}_c=\gamma_{lc}\mathbf{R}^{\frac{i}{2}}_{lc}\mathbf{S}^H\mathbf{S}\sum^C_{m=1}\mathbf{h}_{lc}+\mathbf{R}^{\frac{i}{2}}_{lc}\mathbf{S}^H\mathbf{n}_c
\end{eqnarray}
  where $\gamma_{lc}$ is designed to keep the signal power fixed $\|\mathbf{y}^{'}_c\|^2=\|\mathbf{y}_c\|^2$. The modified BE can be derived in a similar fashion to traditional BE (19)-(23)
in the original manuscript but
with assuming that correlation matrices $\mathbf{Q}_{mc}=\mathbf{R}^{\frac{i}{2}}_{lc}\mathbf{R}_{mc}\mathbf{R}^{\frac{i}{2}}_{lc}$.
To utilize the enhanced compression capability, multiply the received signal by $\mathbf{R}^{\frac{i}{2}}_{lc}\mathbf{S}^H$
and then employ the following filter to estimate the time domain 
$\mathbf{h}_{lc}$\footnote{It is derived in a similar fashion to Bayesian
estimation technique using different correlation set $\mathbf{R}^m_{lc}=\mathbf{R}^2_{lc}$,
the estimated channel for the $c^{th}$ user and $\mathbf{R}^m_{lm}=\mathbf{R}^{\frac{1}{2}}_{lc}\mathbf{R}_{lm}\mathbf{R}^{\frac{1}{2}}_{lc}$.  }. This makes the formulation of the final estimation at the BS as}

 \footnotesize
\begin{eqnarray}
\label{MBE}
\hspace{-0.2cm}\mathbf{P}^i_{lc}=\gamma_{lc}\mathbf{R}^{1+\frac{i}{2}}_{lc}\Bigg(\mathbf{R}^{\frac{i}{2}}_{lc}\bigg(\sum^{C}_{m=1}\mathbf{R}_{mc}\bigg)\mathbf{R}^{\frac{i}{2}}_{lc}+\frac{\sigma^2_n}{\tau}\mathbf{R}^i_{lc}\Bigg)^{-1}\mathbf{R}^{\frac{i}{2}}_{lc}\mathbf{S}^H\end{eqnarray}
\normalsize
The total MSE of estimating $\mathbf{h}_{lc}$ can be evaluated as
\begin{eqnarray}
\epsilon^{i}_{lc}=\mathbb{E}_{\mathbf{h}_{lc}}\Big\{tr\Big((\mathbf{h}_{lc}-\mathbf{P}^i_{lc}\mathbf{y}_c)(\mathbf{h}_{lc}-\mathbf{P}^i_{lc}\mathbf{y}_c)^H\Big)\Big\}
\end{eqnarray} 
and finally the MSE can have the following closed form expression
\footnotesize
\begin{eqnarray}
\label{errormse}
\epsilon^{i}_{lc}=tr\Bigg(\mathbf{R}_{lc}-\gamma^2_{lc}\mathbf{R}^{{i+2}}_{lc}\Big(\mathbf{R}^{\frac{i}{2}}_{lc}\big(\sum^C_{m=1}\mathbf{R}_{mc})\mathbf{R}^{\frac{i}{2}}_{lc}+\frac{\sigma^2_n}{\tau}\mathbf{R}^{i}_{lc}\Big)^{-1}\Bigg).
\end{eqnarray}
\normalsize

To study the impact of this multiplication on the MSE of different MBE is studied in Fig.(\ref{variants}),
in which the MSE is depicted with respect to $i$, where $i=0$ refers to typical
BE. It can be concluded at scenario 2 and 3 that this multiplication reduces
the MSE, which motivates the utilization of this step at the estimation process.
On contrary to the two scenarios, the first one shows a contradicting
performance. This can be explained that without the multiplication, it is
impossible to separate the channel estimations due to the complete overlap
among the users' subspaces. This factor passively enhances in case of this multiplication
due to the noise amplification factors. Therefore, an adaption between the
typical BE and modified BE is required to get the joint benefits of the both
techniques. Moreover, it can be noted that
the best variant (the $i^{th}$ power) of $\mathbf{R}_{lc}$ from the MSE perspective
is the first one and it is sufficient to be used in the estimation.\smallskip
 
In order to optimize the system performance, we can adapt the estimation strategy
based on the involved users who utilize the same training sequence. The 
adaptive strategy
$\mathcal{J}_l$ can be expressed as 
\vspace{-0.3cm}
\begin{eqnarray}
\label{adaptive}
\mathcal{J}_l=
\begin{cases}\begin{array}{cc}
BE &,\mathcal{E}^{BE}_{lc}\leq\mathcal{E}^{MBE}_{lc}\\
MBE &,\mathcal{E}^{BE}_{lc}\geq\mathcal{E}^{MBE}_{lc}.
\end{array}\end{cases}
\end{eqnarray}
The selection procedure is implemented at the system installation level and
is performed once for each channel. Therefore, the users who utilize the same training
sequence can have different estimation techniques. It can be noted that the
important frequencies of the target estimate do not change with the multiplication
of $\mathbf{R}^{\frac{1}{2}}_{lc}$, since the eigenvectors of the estimate
are not change by this multiplication. However, the subspaces spanned by
interfering signals change due to the modified covariance matrices $\mathbf{\hat{R}}_{lm}=\mathbf{R}^{\frac{1}{2}}_{lc}\mathbf{R}_{lm}\mathbf{R}^{\frac{1}{2}}_{lc}$. However, there is
no mathematical explanation clarifies the relation between the eigenvectors of
$\mathbf{R}_{lm}$ and $\mathbf{\hat{R}}_{lm}$. This makes the selection of
the most appropriate estimation technique hard, thus the adaptation based on MSE is the solution for this issue.  
The adaptation depends on the subspaces spanned by the target and interfering signals related to traditional and modified techniques. Moreover, the multiplication
by other variants can lead to excessive contamination correlation with target
subspaces, this makes the contamination subspaces span the same subspaces as the target estimate which actually makes the contamination mitigation
much harder.

\vspace{-0.15cm}     
\subsection{Modified DCT Bayesian Estimation (MDBE)}
\normalsize
The concept of integrating the DCT with modified BE can be adopted to enhance
the estimation quality, the modified DCT BE can be described as following\\

\footnotesize
\begin{center}
%\begin{table}
\begin{tabular}{p{8.3cm}}
\hline\hline
\textbf{A3.} Modified DCT Bayesian estimation (MDBE) \\
\hline
\begin{itemize}
\item Multiply $\mathbf{R}^{\frac{1}{2}}_{lc}\mathbf{S}^H$ by the received
signal $\mathbf{y}_c$ and get $\mathbf{g}_{lc}$. 
\item Employ the same steps (1)-(4) as DBE at $\mathbf{g}_{lc}$ to $\hat{\mathbf{g}}_{lc}$.
It should be noted that $\mathbf{G}_{lc}$  should be replaced $\mathbf{P}_{lc}$.
\item To obtain the estimate $\mathbf{\hat{h}}_{lc}=\mathbf{\hat{g}}_{lc}$.
\end{itemize}\\
\hline
\end{tabular}
%\end{table}
\end{center}
\normalsize
The adaptation idea can be exploited as in (\ref{adaptive}) to enhance the performance of DCT based
estimation to optimally select the estimation technique from DBE or MDBE.
\vspace{-0.1cm}
\subsection{Modified DCT LS Estimation (MDLS)}
\vspace{-0.1cm}
The joint utilization of the DCT and LS within the modified framework is proposed
to boost the quality of LS estimate, and can be summarized as follows\\
 \footnotesize
\begin{center}
%\begin{table}
\begin{tabular}{p{8.3cm}}
\hline\hline
\textbf{A4.} Modified DCT LS estimation (MDLS)\\ 
\hline
\begin{itemize}
\item Multiply $\mathbf{R}^{\frac{1}{2}}_{lc}\mathbf{S}^H$ by the received
signal $\mathbf{y}_c$ and get $\mathbf{g}_{lc}$. 
\item Employ the same steps (1)-(4) as DLS at $\mathbf{g}_{lc}$ to $\hat{\mathbf{g}}_{lc}$.
\item To obtain the estimate $\mathbf{\hat{h}}_{lc}=\mathbf{R}^{-\frac{1}{2}}_{lc}\mathbf{\hat{g}}_{lc}$.
\end{itemize}\\
\hline
\end{tabular}
%\end{table}
\vspace{-0.1cm}
\end{center}\smallskip
\normalsize
rrelation matrix structure
$\mathbf{R}_{lc}=\mathbf{W}_{lc}\mathbf{\Delta}_{lc}\mathbf{W}^H_{lc}=\mathbf{\hat{W}}_{lc}\mathbf{\hat{\Delta}}_{lc}\mathbf{\hat{W}}_{lc}$,
where $\mathbf{\hat{\Delta}}_{lc}\in\mathbb{R}_{+}^{r
\times r}$,$\mathbf{\hat{W}}_{lc}\in\mathbb{C}^{M\times r}$ are the matrices that contain the positive eigenvalues and the eigenvectors that are associated
with the positive eigenvalues. This enables the implementation of the proposed
algorithms in the case of ill-conditioned channels. The usage of DLS and MDLS can be adapted like (\ref{adaptive})  to enhance the quality of estimation.  
\vspace{-0.2cm}
\section{Joint spatial estimation and training sequences allocation}
\label{allocation}
\vspace{-0.1cm}
Several training sequence assignment techniques are investigated in the recent
literature \cite{nossekt}-\cite{haifan}, in order to select the optimal
set of users that can utilize the same training sequence simultaneously.
These allocation techniques mimic the previously proposed scheduling algorithms in multiuser MIMO
scenarios\cite{yoo} as they try to assign the same training sequence to the users who have distinct
subspaces. 
 The training sequence assignment strategies inherit the concept of multiuser diversity,
which depends on the pool of the involved users, as a consequence, the system performance
is degraded for low number of users. In this section, an heuristic approach assigns
the available training sequences to required  users'
channel estimates is proposed as
\vspace{-0.35cm}
\subsection{Training Sequence Allocation for BE related techniques}
\normalsize

Define the set of users who utilizing the training sequence $\mathbf{s}_l$
by $\mathcal{U}(\mathbf{s}_l)$, and the set of user who adopt the modified
BE estimation as $\mathcal{U}_M(\mathbf{s}_l)\subset\mathcal{U}(\mathbf{s}_l)$
and the set of users whose channel are estimated by typical BE $\mathcal{U}_B(\mathbf{s}_l)\subset\mathcal{U}(\mathbf{s}_l)$
such that $\mathcal{U}_M(\mathbf{s}_l)\cup\mathcal{U}_B(\mathbf{s}_l)=\mathcal{U}(\mathbf{s}_l)$ and the set of all users $\mathcal{U}$. The
set of all users who have not assigned a training sequence yet is defined
as $\mathcal{M}$. We define a measuring function as (\ref{errormetric}), the allocation algorithm is described as 

\begin{figure*}[t]
\begin{tabular}[t]{c}
\begin{minipage}{17 cm}
\small
\begin{eqnarray}
\label{errormetric}
\hspace{-4cm}\small&\mathcal{E}(\mathcal{U}_B(\mathbf{s}_l),\mathcal{U}_M(\mathbf{s}_1))&=\sum_{\forall
c\in \mathcal{U}_M(\mathbf{s}_l) }tr\Bigg(\mathbf{R}_{lc}-\mathbf{R}^{\frac{3}{2}}_{lc}\bigg(\sum_{\forall
 m \in \mathcal{U}_M(\mathbf{s}_l)}\mathbf{R}^{\frac{1}{2}}_{lc}\mathbf{R}_{mc}\mathbf{R}^{\frac{1}{2}}_{lc}+\frac{\sigma^2_n}{\tau}\mathbf{R}_{lc}+\zeta\mathbf{I}_M\bigg)^{-1}\Bigg)\\\nonumber
&+&\sum_{\forall
c \in \mathcal{U}_B(\mathbf{s}_l) }tr\bigg(\mathbf{R}_{lc}-\mathbf{R}_{lc}^2\Big(\displaystyle\sum_{\forall
m\in\mathcal{U}_B(\mathbf{s}_l)}\mathbf{R}_{mc}+\frac{\sigma_n^2}{\tau}\mathbf{I}_{M}\Big)^{-1}\bigg).
\end{eqnarray}
\normalsize
\end{minipage}
\vspace{0.2cm}\\
\hline
\hline
\end{tabular}
\end{figure*}
\footnotesize
\begin{center}
\begin{tabular}{p{8.3cm}}\\
\hline
\hline
\textbf{A5.} Greedy Training Sequence Allocation Algorithm for (BE/MBE) estimation \\
\hline
\begin{itemize}
\item Start from any training sequence, without any loss of generality we
start with $\mathbf{s}_1$.
\item Determine the training sequence reuse factor $\mathcal{K}$.
\item Define the set of users who should allocate the same training sequence
as
$\mathcal{U}(\mathbf{s}_l)$.
\item Pick random user $u\in \mathcal{M}=\mathcal{U}-\cup^{l-1}_{n=1}\mathcal{U}(\mathbf{s}_n)$
\item Step 1: set $\mathcal{U}(\mathbf{s}_l)=u$
\item Step 2: if $|\mathcal{U}(\mathbf{s}_l)|\leq\mathcal{K}$ 
\begin{enumerate}
\item for users= $\mathcal{M}(1)$ to $\mathcal{M}(|\mathcal{M}|)$
\begin{eqnarray}\nonumber
u^*&=&\underset{u}{\text{min}}\Big(\underset{{u_M}\in\mathcal{M}}{\text{min}}\hspace{0.1cm}{\mathcal{E}({\mathcal{U}_B(\mathbf{s}_l),\mathcal{U}_M(\mathbf{s}_l)\cup \{u_M\})}}\\\nonumber
&,&\underset{{u_B}\in\mathcal{M}}{\text{min}}\hspace{0.1cm}{\mathcal{E}({\mathcal{U}_B(\mathbf{s}_l)\cup
\{u_B\},\mathcal{U}_M(\mathbf{s}_l))}}\Big)
\end{eqnarray}
\item$\mathcal{M}=\mathcal{M}/\{u^*\}$
\item Estimation technique selection for $u^*$  is done according to \footnotesize\begin{eqnarray}\nonumber\begin{cases}
\text{MDBE},
{\mathcal{E}({\mathcal{U}_B(\mathbf{s}_l),\mathcal{U}_M(\mathbf{s}_l)\cup \{u^*\})}}\leq{\mathcal{E}({\mathcal{U}_B(\mathbf{s}_l)\cup
\{u^*\},\mathcal{U}_M(\mathbf{s}_l)\big)}}\\
\text{DBE},{\mathcal{E}({\mathcal{U}_B(\mathbf{s}_l),\mathcal{U}_M(\mathbf{s}_l)\cup \{u^*\})}}\leq{\mathcal{E}({\mathcal{U}_B(\mathbf{s}_l)\cup
\{u^*\},\mathcal{U}_M(\mathbf{s}_l)\big)}}
\end{cases}
\end{eqnarray}
\normalsize
\item $\mathcal{U}(\mathbf{s}_l)=\mathcal{U}(\mathbf{s}_l)\cup \{u^*\}$
\item go to step 2.
\end{enumerate}
\end{itemize}\\
 \hline
 \end{tabular}
 \vspace{-0.1cm}
\end{center}
\normalsize

This algorithm aims at minimizing the estimation error in two steps: the
first step is summarized by allocating
training sequences to the set of users whose spatial signatures have the most distinctive
characteristics. The performance improves with the number of user as it
becomes more likely to find users with distinct second order
statistics to be assigned the same training sequence. The second step is
the selection of the optimal estimation technique whether it is the typical
or the modified BE. 
\vspace{-0.4cm}
\subsection{Training Sequence Allocation for DLS/MDLS Techniques}
By taking a look at (\ref{msels}), it can be noted the MSE depends on the
trace of the interfering covariance matrices of the users using the same
sequences. 
\begin{newtheorem}{theo1}{Theorem}
\begin{theo1}
For typical LS, employing any training sequence allocation  algorithm does
not reduce the MSE performance.\smallskip  
\vspace{-0.2cm}
\end{theo1}
\begin{proof}
It can be proved from (\ref{msels}) that MSE performance depends on the traces
of all involved covariance matrices and it does not depend on the signal
space spanned by each correlation matrices. Therefore any set of correlation
matrices has the same trace will result in the same MSE.\smallskip
\end{proof}
\end{newtheorem}

However, this theorem does not apply for DCT based LS estimation. Since the
MSE metric of LS does not provide us an indication about the spatial separation
among the subspaces and the spatial frequencies. Since the correlation matrices
capture the information about subspaces, the intersection of these subspaces
results in interference. Therefore, to measure the spatial separation
between the $l^{th}$ and $m^{th}$  users towards the $c^{th}$ BS, we define
the following metric:
\begin{eqnarray}
\label{sps}
\delta^c_{lm}=\frac{tr\big(\mathbf{R}_{lc}\mathbf{R}_{mc}\big)}{tr(\mathbf{R}_{lc})tr(\mathbf{R}_{mc})}
\vspace{-0.2cm}
\end{eqnarray}
where $0\leq\delta^c_{lm}\leq 1$. When $\delta^c_{lm}$ is close to 1, the
users span highly overlapped subspaces, but when $\delta^c_{lm}$ is close
to 0, the users span a highly separated subspaces. The spatial separation between the $m^{th}$
user in $c^{th}$ and interference subspaces of all interfering users at the
estimation process can be written as 
\vspace{-0.1cm}
\begin{eqnarray}
\label{spss}
\delta^c_{l}=\frac{tr\big(\mathbf{R}_{lc}\sum^C_{m=1,m\neq l}\mathbf{R}_{mc}\big)}{tr(\mathbf{R}_{lc})tr(\sum^C_{m=1,m\neq
l}
\mathbf{R}_{mc})}.
\end{eqnarray}

 Define the set of users who utilizing the training sequence $\mathbf{s}_l$
by $\mathcal{U}(\mathbf{s}_l)$, the set of all users $\mathcal{U}$ and the
set of all users who have not assigned a training sequence yet as $\mathcal{M}$. We define the following metric as 
\begin{eqnarray}\nonumber
\delta(\mathcal{U}(\mathbf{s}_l))=\displaystyle\sum_{l\in\mathcal{U}(\mathbf{s}_l)} \frac{tr\Big(\mathbf{R}_{lc}\sum_{m\in \mathcal{U}(\mathbf{s}_l),m\neq l}\mathbf{R}_{mc}\Big)}{tr\Big(\mathbf{R}_{lc}\Big)tr\Big(\sum_{m\in\mathcal{U}(\mathbf{s}_l),m\neq
c}\mathbf{R}_{mc}\Big)}.
\end{eqnarray}
The algorithm can be summarized in (\textbf{A6}).
\vspace{-0.2cm}
\begin{center}
\footnotesize
\begin{tabular}{p{8.3cm}}\\
\hline\hline
\textbf{A6.} Greedy Training Sequence Allocation Algorithm for DLS/MDLS\\
\hline
\begin{itemize}
\item Start from any training sequence. Without any loss of generality, we
start with $\mathbf{s}_1$.
\item Determine the training sequence reuse factor $\mathcal{K}$.
\item Define the set of users who should allocate the same training sequence
as
$\mathcal{U}(\mathbf{s}_l)$
\item Pick random user $u\in \mathcal{M}=\mathcal{U}-\cup^{l-1}_{n=1}\mathcal{U}(\mathbf{s}_n)$
\item Step 1: set $\mathcal{U}(\mathbf{s}_l)=u$
\item Step 2: if $|\mathcal{U}(\mathbf{s}_l)|\leq\mathcal{K}$ 
for users $\mathcal{M}(1)$ to $\mathcal{M}(|\mathcal{M}|)$
\begin{enumerate}
\item $u^*=\underset{{u}\in\mathcal{M}}{\text{minimize}}\hspace{0.1cm}{\delta({\mathcal{U}(\mathbf{s}_l)\cup \{u\})}}$
\item $\mathcal{M}=\mathcal{M}/\{u^*\}$
\item $\mathcal{U}(\mathbf{s}_l)=\mathcal{U}(\mathbf{s}_l)\cup \{u^*\}$
go to Step 1.
\end{enumerate}\vspace{-0.4cm}
\end{itemize}\\
\hline
\end{tabular}
\end{center}
\normalsize
 \smallskip

 However, the performance of DLS and MDLS  can be optimized using training
 sequences allocation
 algorithm. This can be explained by their dependency on the spatial content
 of the users who utilize the same training sequence. Unfortunately, as theorem
 1 states, the MSE metric of LS is incapable of addressing such a separation. On
 the other hand, the MSE of BE metric offers such a quality so we can optimize
 the performance of DLS and MDLS by allocating the training sequences as in previous greedy algorithms. 
\vspace{-0.1cm}    
\section{Numerical Results}
\label{sim}
In the previous sections, several algorithms were developed to decontaminate
the training sequences in multiantenna multicell interference channels. In this section, the
performance of these algorithms are evaluated and compared. Monte Carlo simulations are performed to examine the efficiency of the  proposed algorithms. To assess the system
performance, we need to compare it with the state of the art techniques as
 typical Bayesian estimator and least square estimator. The considered algorithms
 and the corresponding equations
 are summarized in table \ref{table1}. \smallskip

We consider a multi-cell network where the
users are all distributed around the base stations. We denote the ratio of
 the power of the direct link $\alpha_{cc}$ over interfering link $\alpha_{lc}$ with $\beta_{lc}=\frac{\alpha_{cc}}{\alpha_{lc}}$. To test the validity of
the proposed algorithms, we deal with estimation at interference limited
scenario, we consider $\beta_{lc}=1,\forall l,c$, we drop the index $\beta$ for the
ease of notation. We adopt the model of a
cluster of synchronized and hexagonally shaped cells. We assume that the training
sequence $P=0 dB$. The used metric for evaluating the system performance is normalized
sum mean square
error and can be expressed as: \\
\begin{equation}
\epsilon= 10 \log_{10}\bigg(\frac{\sum_{c=1}^C\|\hat{\mathbf{h}_{lc}}-\mathbf{h}_{lc}\|^2}{\sum_{c=1}^c\|\mathbf{h}_{lc}\|^2}\bigg).
\end{equation}
\begin{center}
\begin{table}
{\begin{tabular}{|p{2cm}|p{4.5cm}|p{0.5cm}|}
\hline
LS & Least Square Estimator&(\ref{ls1}).\\
\hline
\small BE& Bayesian Estimator&(\ref{mmse}).\\
\hline
\small DBE& DCT Bayesian Estimator&(\textbf{A1}).\\
\hline
\small MBE& Modified Bayesian Estimator&(\ref{MBE}).\\
\hline
\small DLS& DCT Least Square Estimator&(\textbf{A2})\\
\hline
\small MDBE& Modified DCT Bayesian Estimator&(\textbf{A3}).\\
\hline
\small MDLS& Modified DCT Least Square Estimator&(\textbf{A4}).\\
\hline
\small ABE-MBE& Adaptive BE-MBE&(\ref{adaptive}).\\
\hline
\small ADBE-MDBE& Adaptive DBE-MBE&\dots\\
\hline
\small PA& Training sequence Allocation& \small\textbf{A5}-\textbf{A6}.\\
\hline
\small WPA& Without training sequence allocation&\dots\\
\hline
\end{tabular}}\\

\caption{\label{table1}The acronyms used in the simulation section for the adopted algorithm
and their corresponding equations.}
\end{table}
\end{center}
\vspace{-0.2cm}

In this section, we adopt the correlation model in \ref{practical}  $\theta_{sk}$ denotes the lower limit of the angular spread for each user with respect to the $k^{th}$ BS. $\Delta\theta$ is the span of the angular spread and it is assumed to be
the same for all users.  $\Delta_o\theta$ determines
the amount of the overlap between the angular spread of different users.
The compression ratio
$\eta$ is selected optimally using
full search to minimize $\eta=\arg\underset{\eta\in(0,1]}{\min}\epsilon$
and this is considered along the simulations section unless mentioned otherwise.
\smallskip
\begin{figure}[h]
\vspace{-0.1cm}
\begin{center}
\vspace{-0.1cm}
\includegraphics[scale=0.65]{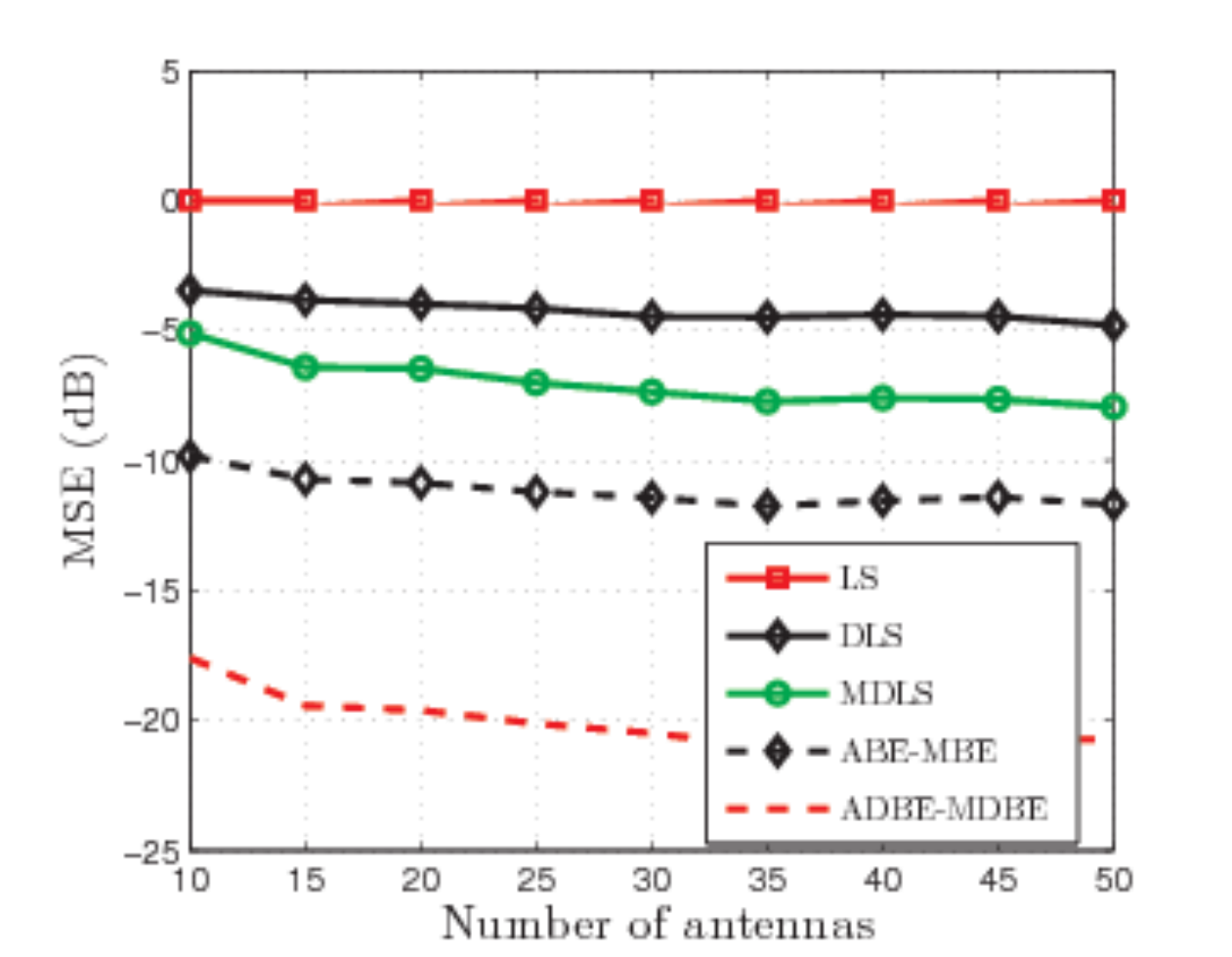}
\vspace{-0.4cm}\caption{\label{ant}\textit{\small The normalized MSE $\epsilon$ vs.
number of antennas, the considered scenario $C=2$, $K=2$, $\beta=1$, $\Delta\theta=20^{\circ}$,
$\theta_{s1}=\{10^{\circ},25^{\circ}\}$, $\theta_{s2}=\{20^{\circ},35^{\circ}\}$,
$\Delta_o\theta=5^{\circ}$, and $P=0$. }}
\end{center}
\end{figure}
\subsection{Accurate Second order statistics}
Fig. (\ref{ant}) illustrates the comparison among the different estimation techniques with respect to the number of antennas. It can be noted 
that MSE performance reduces monotonically with respect to the number of antennas at BS. It can be concluded that the LS has the worst performance
in comparison with the other techniques. This can be explained by the fact
that LS estimation just removes the impact of training sequence without introducing
any processing to the aggregate of the received signals. DLS and MDLS are introduced to exploit the correlation information of the target estimate and incorporate
the concept of DCT to get the important spatial frequency related to each
estimate. It can be inferred that DLS and MDLS  outperform the typical LS
for all antennas scenario. In this figure, we also depict the comparison
between the BE estimation techniques. The proposed techniques DBE and MDBE overcome the
ABE-MBE. Intuitively, the adaptive MBE-BE performs better than
BE and also this applies to DCT based technique, so the figures in this section
display the comparison between the adaptive proposed schemes and the typical
ones.\smallskip

\begin{figure}[h]
\vspace{-0.1cm}
\begin{center}
\vspace{-0.1cm}
\includegraphics[scale=0.65]{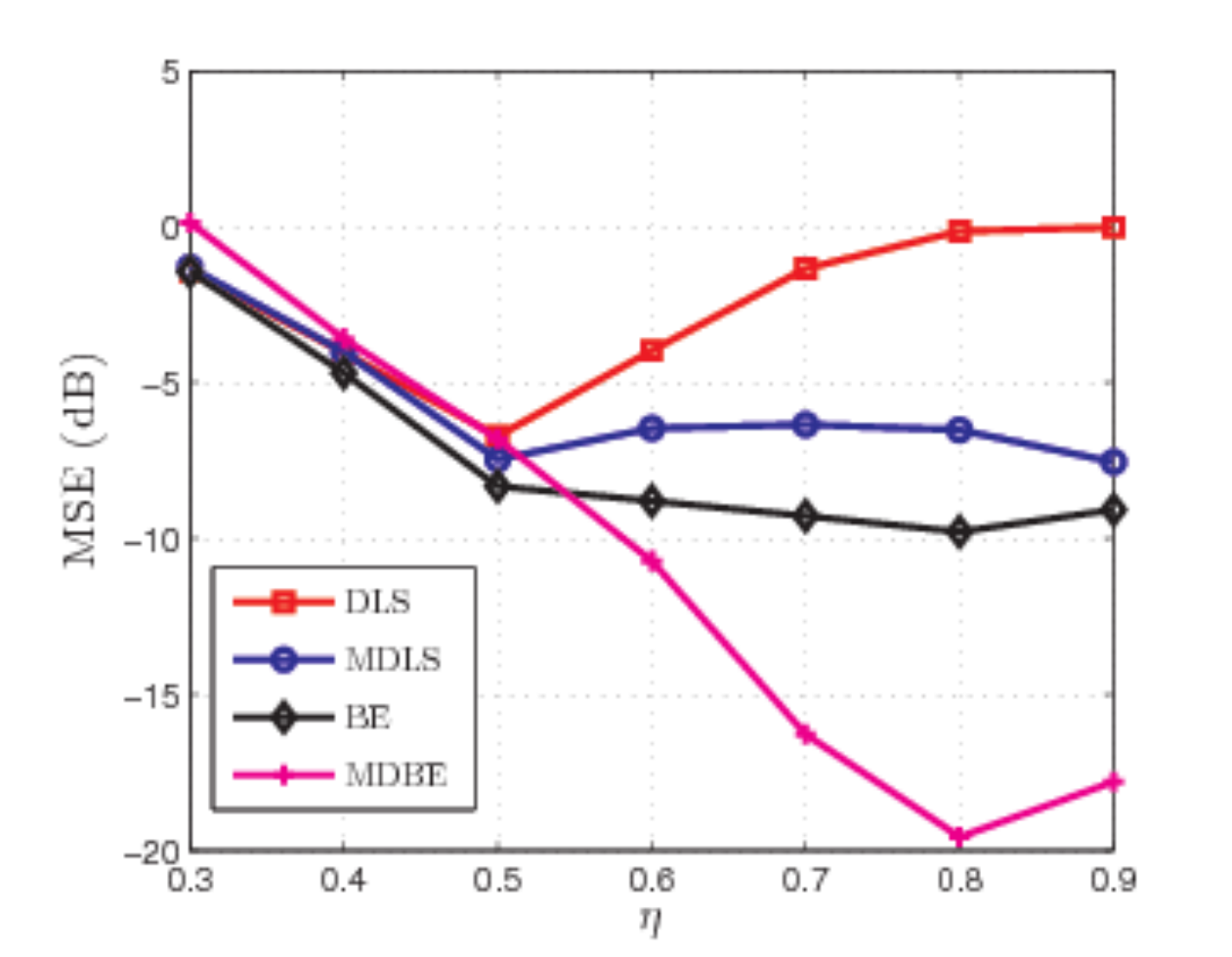}
\vspace{-0.3cm}\caption{\label{eta}\textit{\small The normalized MSE $\epsilon$ vs.
$\eta$, the considered scenario $C=2$, $K=2$, $\beta=1$, $P=0$,  $\theta_{s1}=\{10^{\circ},
30^{\circ}\}$, $\theta_{s2}=\{20^\circ,35^{\circ}\}$ $\Delta\theta=20$ and $\Delta_o\theta=5^{\circ}$.}}
\end{center}
\end{figure}

In Fig. (\ref{eta}), we study the performance of different proposed DCT based
estimation techniques with respect to the compression ratio $\eta$. It can be noted that at high $\eta$, a large percentage of spatial frequencies are utilized. Therefore, the contamination is still contained in the influential
frequencies of each estimate, which makes it hard to get an accurate estimate. On the other hand, when $\eta$ is low, a small percentage of spatial frequencies
are utilized which also removes a part of the useful signal to get an accurate
channel estimate. Fig. (\ref{eta}) shows that for the DCT Based LS estimation
(DLS, MDLS)
the optimal $\eta$ is lower than DCT Based Bayesian estimation
(DBE, MDBE).
This can be justified by the fact that typical BE estimation techniques,
by their construction, have the capability of handling the interference at estimation
in contrast to LS based estimation. This makes the DCT compression step for
DLS and MDLS more valuable as it adds the capability of mitigating the
contamination. While for DBE and MDBE, it enhances the capability of mitigating
the interference, which makes it unnecessary to have lower values of $\eta$.  
\begin{figure}[h]
\begin{center}
\includegraphics[scale=0.65]{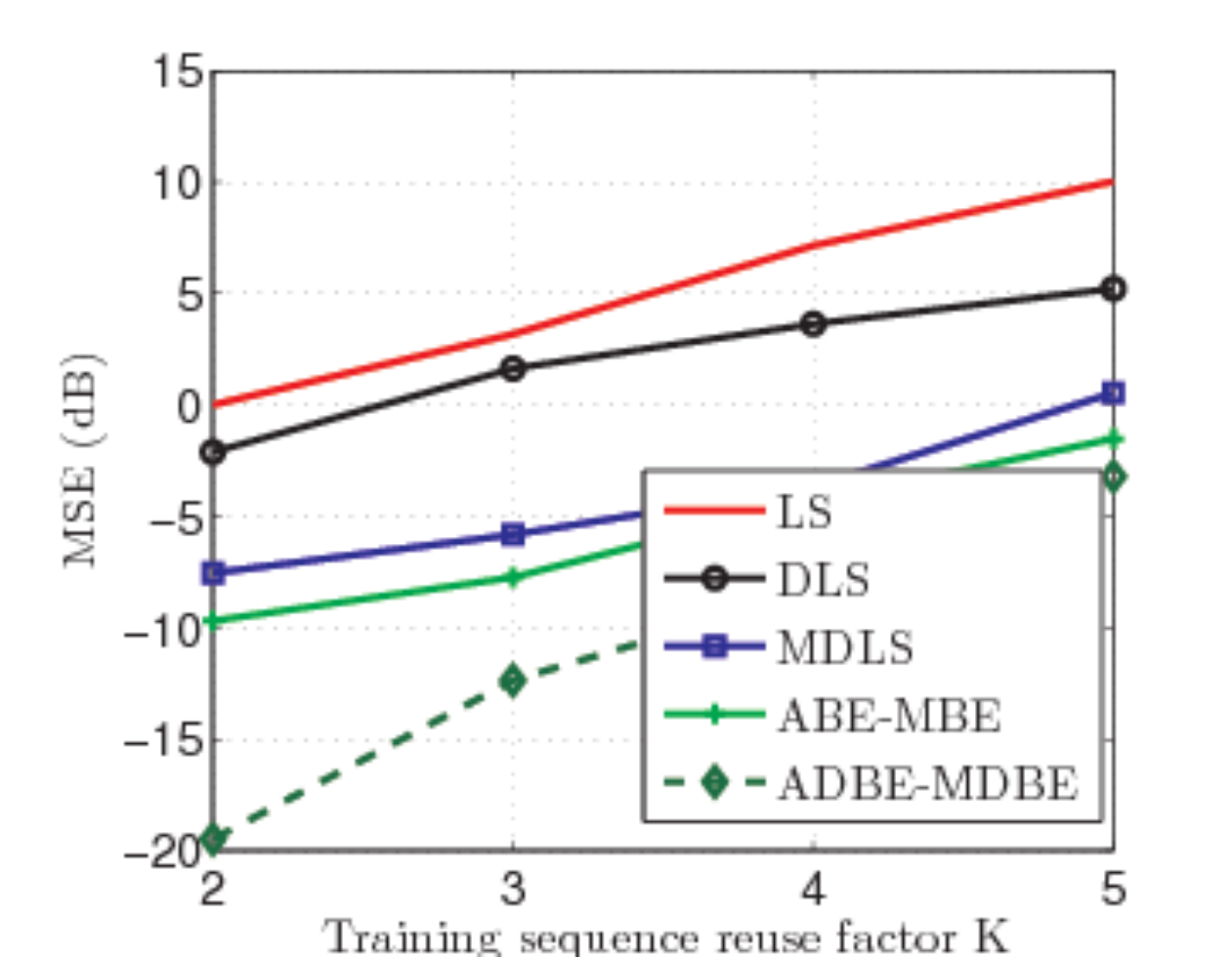}
\caption{\label{K}\textit{\small The normalized MSE vs  training sequence reuse factors $M=20$,$P=0
dB$, $\beta=1$,
$\theta_{s1}=\{10^{\circ},25^{\circ}, 40^{\circ}, 55^{\circ}, 70^{\circ}\}$, $\theta_{s2}=\{20^{\circ},35^{\circ}, 50^{\circ}, 65^{\circ}, 80^{\circ}\}$,
$\theta_{s3}=\{0^{\circ},15^{\circ},30^{\circ},45^{\circ},60^{\circ}\}$,
$\theta_{s4}=\{35^{\circ},50^{\circ},65^{\circ},70^{\circ},85^{\circ}\}$,
$\theta_{s5}=\{85^{\circ},70^{\circ},55^{\circ},40^{\circ},25^{\circ}\}$,
 $\Delta\theta=20$, and $\Delta_o\theta=5^{\circ}$.}}
 \vspace{-0.2cm}
\end{center}
\end{figure}  

Fig. (\ref{K}) illustrates the MSE performance with respect to training sequence reuse factor. It is anticipated that increasing the training sequence reuse
over the network increases the interference levels at the estimation, which
makes it harder to obtain an accurate estimate of the required channel.
This figure plots the comparison of different proposed estimation algorithms.
It can be noted that there is a gap between ADBE-MDBE and
ABE-MBE at low training sequence reuse factors, but this gap reduces with increasing
the training sequence reuse factor. This can hamper the implementation of ADBE-MDBE at
high training sequence reuse factor. However, the simulated scenarios consider 
interference limited case namely $\beta=1$. MDLS performs closely to ABE-MBE for all
training sequence reuse factors. Comparing MDLS with DLS, it can be seen that MDLS
shows  an enhanced performance over DLS, and the both proposed techniques
outperform the typical LS for all training sequence reuse values.\smallskip 

Fig. (\ref{overlap}) displays the comparison of different estimation techniques
with the respect to the angular spread overlap $\Delta_{o}\theta$. The considered scenario. Intuitively, higher overlap of the angular spread leads to
higher training sequence contamination. It can be noticed that DLS and MDLS converge
to LS in the scenario of complete angular spread overlap. While
for BE techniques, it can be viewed that ADLS-MDLS has the same performance
as ABE-MBE in case of complete angular spread overlap. For the scenario
of distinct subspaces $\Delta_o\theta=0^{\circ}$,  the typical BE and MBE and consequently their adaptive schemes perform better than ADBE-MDBE
because there is no contamination at this scenario, and the DCT compression is meaningless.\smallskip
\begin{figure}[h]
\begin{center}
\includegraphics[scale=0.65]{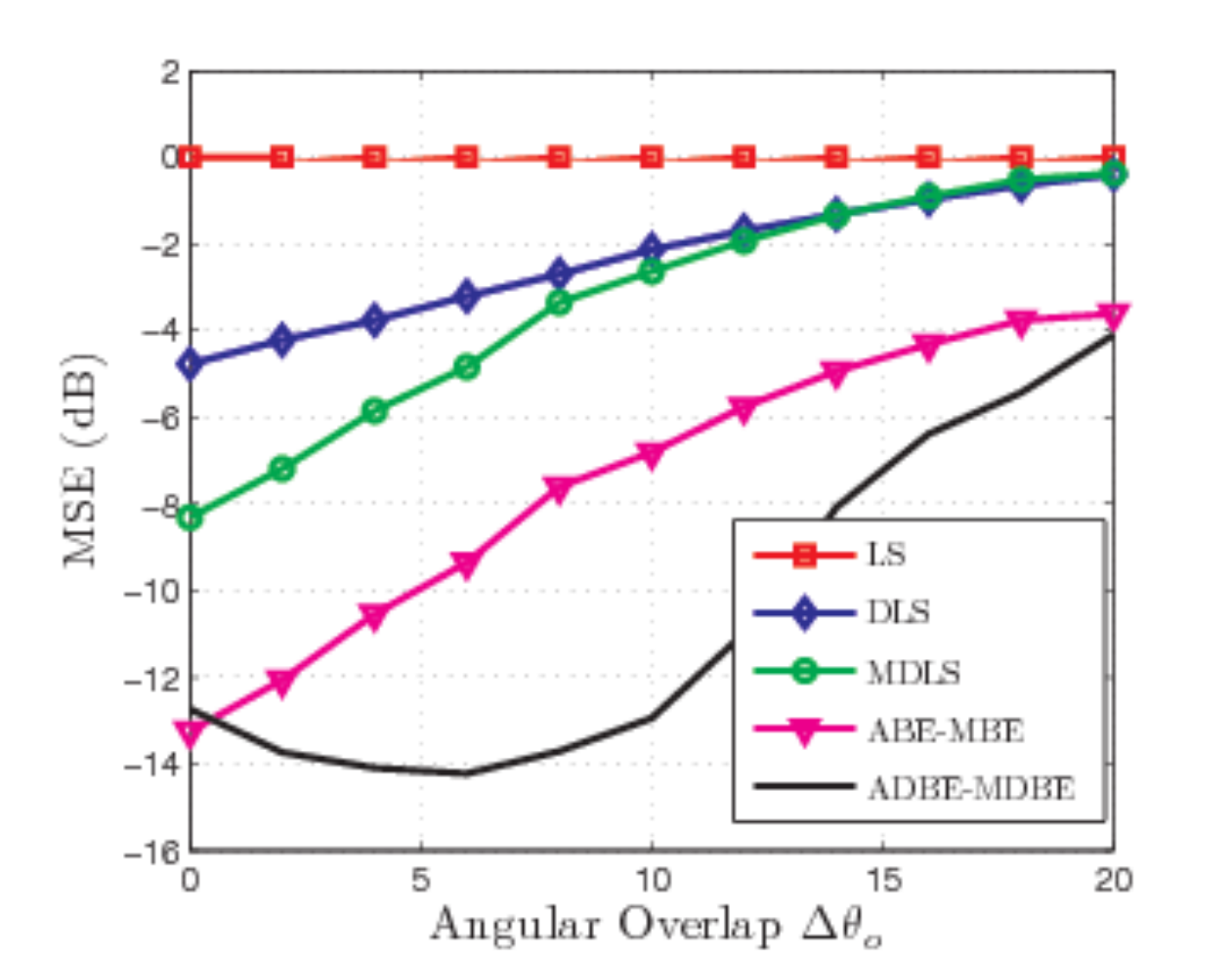}
\vspace{-0.2cm}
\caption{\label{overlap}\textit{\small The normalized MSE vs angular overlap $K=2$, $C=2$,$P=0dB$, $\beta=1$,
$\theta_{s1}=\{10^{\circ},30^{\circ}\}$, $\theta_{s2}=\{70^{\circ},85^{\circ}\}$,
$M=10$.}}
\end{center}
\end{figure}
The
  efficiency of employing training sequence allocation algorithms is depicted in Fig.
(\ref{PA}).
We want to allocate 4 training sequences to 8 users in the two cells. The comparison of each estimation technique without and with
training sequence allocation is plotted and it can be inferred that the system performance
is enhanced if training sequence allocation algorithm is employed. This can be explained
by the fact that selecting the which are assigned the same training sequence enhances the chances of being them naturally separable.
Therefore, the contamination is reduced by employing these algorithms.  
      
It is observed from the simulations that adapting the performance of different BE techniques as ABE-MBE and ADBE-MDBE achieves a better performance than
employing just BE or MBE. Simulations have shown that the frequency of using MBE
(or MDBE) versus
BE (or DBE) in the adaptive algorithm is $75\%$ versus $25\%$ in the scenario of $C=2$,
$K=2$, $\Delta_o\theta=10^{\circ}$ and $\Delta\theta=25^{\circ}$. This observation
shows that the modified algorithm outperforms the original ones in majority
of channel realizations.
\vspace{-0.35cm}
\subsection{The impact of inaccurate second order statistics}
\begin{figure}[h]
\begin{center}
\includegraphics[scale=0.65]{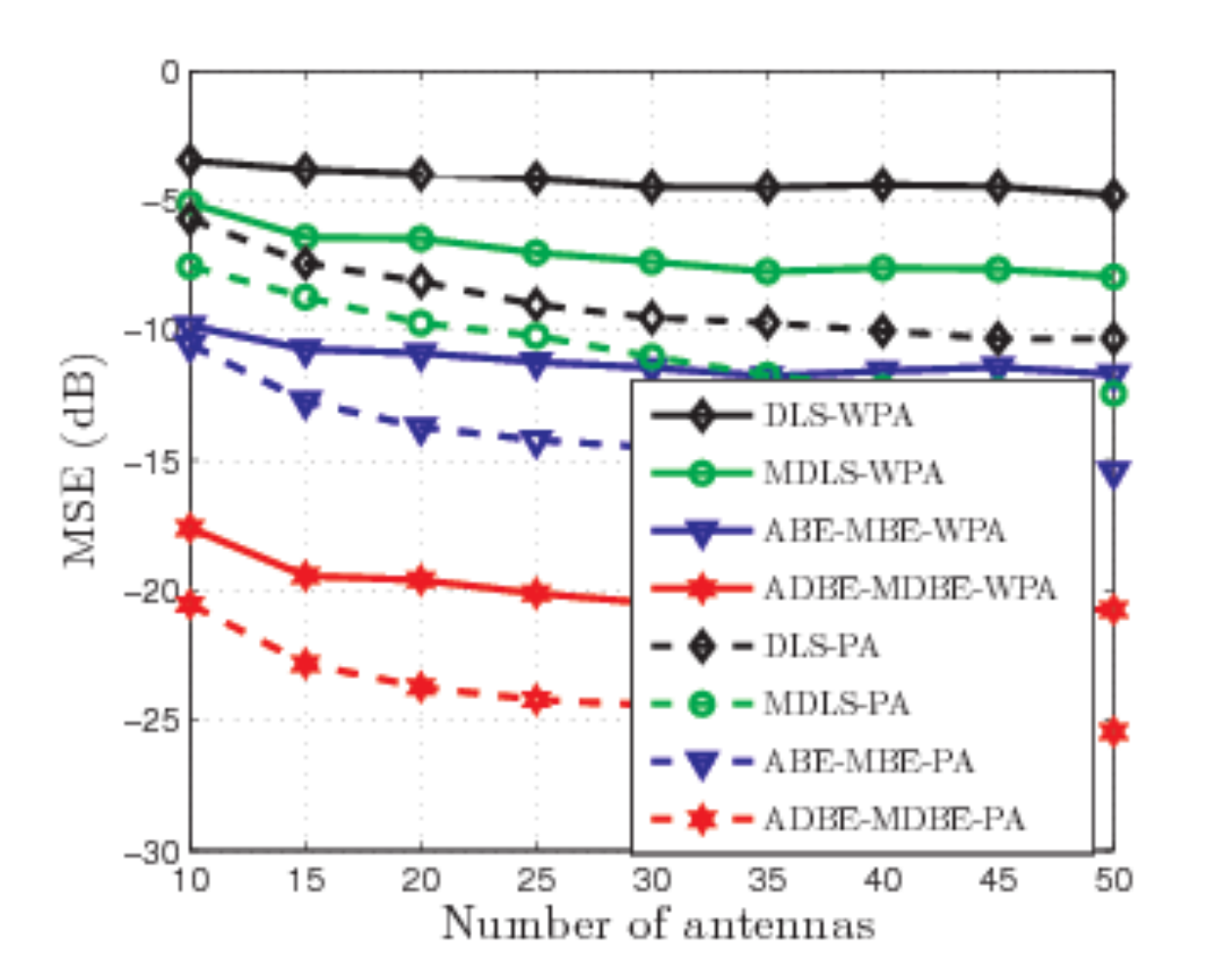}
\vspace{-0.2cm}\caption{\label{PA}\textit{\small The normalized MSE vs number of antennas $M=20$,$P=0
dB$, $\beta=1$,
$\theta_{s1}=\{10^{\circ},40^{\circ}\}$, $\theta_{s2}=\{40^{\circ},60^{\circ}\}$,$\Delta\theta=20$,
and $\Delta_o\theta=5^{\circ}$. }}\end{center}
\end{figure}
\textcolor{black}{
In the previous figures, we study the performance of the proposed techniques
assuming accurate covariance acquisition at all
BSs. Assuming inaccuracies and estimation errors in the covariance acquisition
step may affect the performance of the suggested methods. The impact of these
inaccuracies
is depicted in Fig. (\ref{err}), which plots the relation between the
MSE and the uncertainty. It can be noted that typical LS is not affected
by the uncertainty, this is intuitive since the estimation process does not
depend on the covariance information. This fact also applies on the MSE assuming
DLS estimation. Although the important spatial frequencies determination
depends on the covariance information, the estimator design is independent
from covariance information at BSs. Moreover, it can be concluded that the BE based techniques are more sensitive to covariance errors as they are functions
of the covariance matrices of the involved users,the inaccuracies
affect the contamination rejection at different BSs. These systems' MSE
increase with respect to the amount of covariances inaccuracies. Finally,
since MDLS is modified version of DLS and LS, which is based on increasing the covariance of the target channels, it is expected that the covariance inaccuracies degrade
the estimation performance at different BSs.} \smallskip

\begin{figure}[h]
\begin{center}
\includegraphics[scale=0.65]{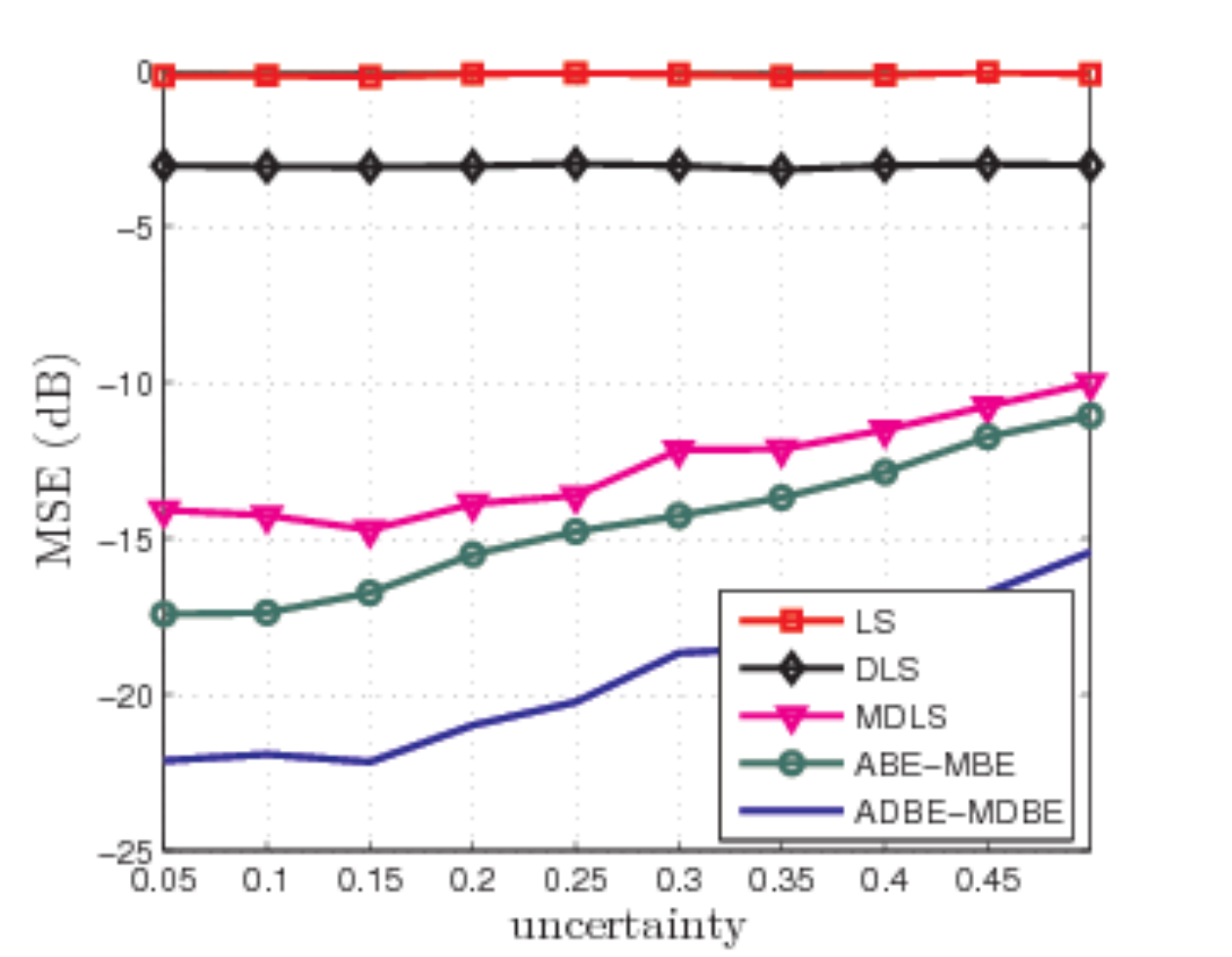}
\vspace{-0.2cm}\caption{\label{err}\textit{\small The normalized MSE vs number of antennas $M=20$,$P=0
dB$, $\beta=1$,
$\theta_{s1}=\{10^{\circ},30^{\circ}\}$, $\theta_{s2}=\{40^{\circ},60^{\circ}\}$,$\Delta\theta=20$,
and $\Delta_o\theta=0^{\circ}$. }}\end{center}
\end{figure}
\textcolor{black}{
The impact of inaccuracies considering the overlap in angular spread 
users' subspaces is depicted in Fig. (\ref{erro}). In comparison with Fig.
(\ref{err}), the performance of the proposed algorithms is studied. It can
be noted that the proposed techniques are more sensitive to uncertainties
which can be translated into higher MSE at all. As conclusion, to protect
the system from these uncertainties, training sequence allocation should
be adapted to take into consideration these uncertainties in their designs.}       
\begin{figure}[h]
\begin{center}
\includegraphics[scale=0.65]{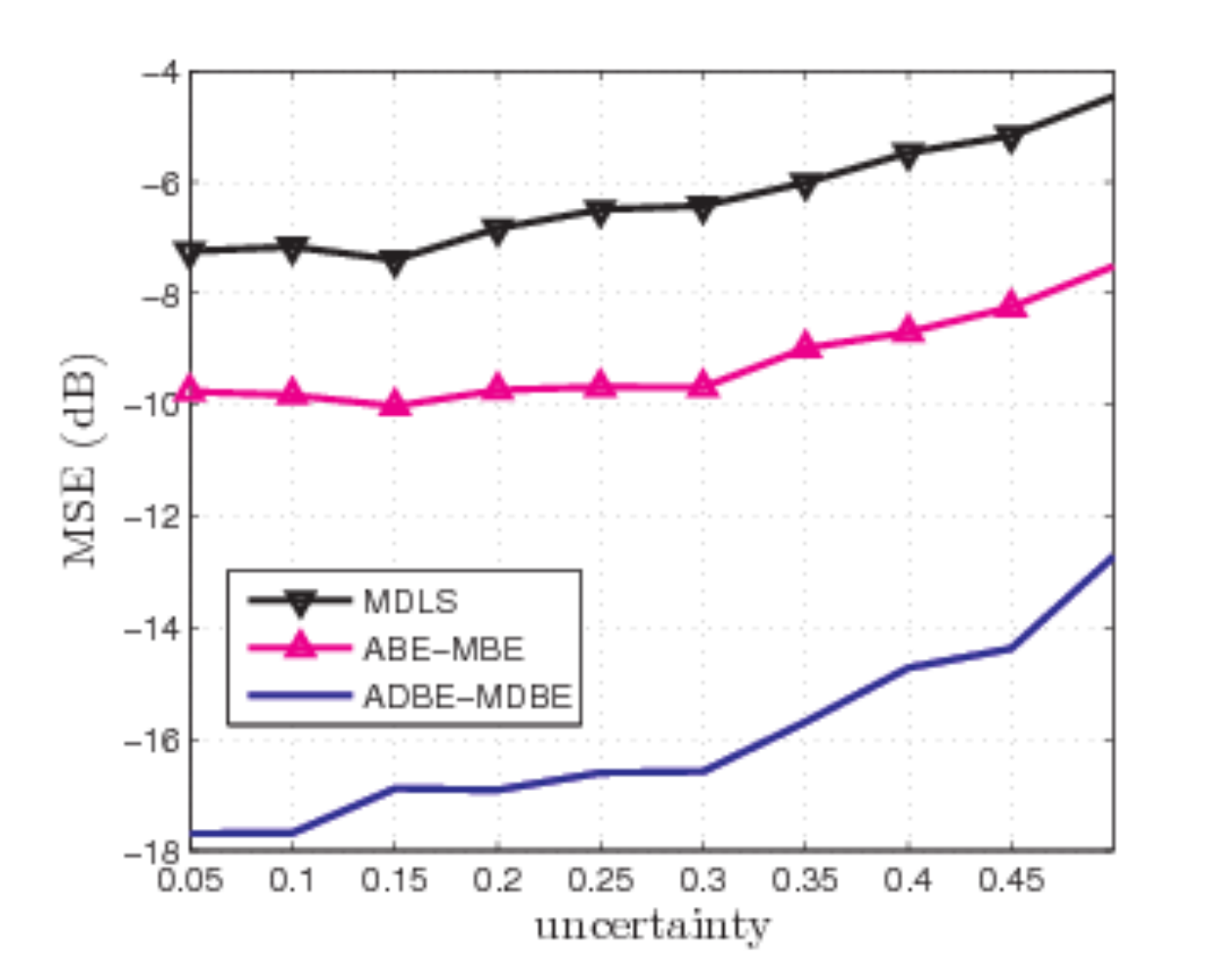}
\vspace{-0.2cm}\caption{\label{erro}\textit{\small The normalized MSE vs number of antennas $M=20$,$P=0
dB$, $\beta=1$,
$\theta_{s1}=\{10^{\circ},40^{\circ}\}$, $\theta_{s2}=\{40^{\circ},60^{\circ}\}$,$\Delta\theta=20$,
and $\Delta_o\theta=5^{\circ}$. }}\end{center}
\end{figure}

\section{Conclusion}
In this work, we studied the interference during the channel estimation  and its impact on the system performance in
multicell multiantenna networks. We investigated the performance of a Bayesian estimation and a least square estimation framework and formulated
the lower bound and the upper bound of mean square error for such estimator. We proposed  modified techniques to enhance the estimation accuracy by introducing the DCT, thereby transforming the problem into a different domain. This allowed
the development of a new interference mitigation algorithm by compressing the spatial frequencies. It enabled enhanced estimation utilizing the DCT compressing
capability to reduce the overlap in the interfering subspaces and boost
the separation in the spatial domain. We incorporated this concept with training
sequence
allocation to assign different training sequences to reduce
 the interference in the overlapping area of the DCT subspaces. The performance of proposed algorithms was studied and compared
to current state of the art techniques. From the simulation results, it can
be concluded that the proposed algorithms provide considerable gains over the
conventional Bayesian and least squares estimation techniques.\smallskip
\vspace{-0.3cm}
\appendix
\vspace{-0.2cm}
\textit{Proof of Lemma 3}: This can be proven by  taking the expectation
of $\mathbb{E}_{\theta}[\mathbf{h}^{d}_{lc}]$ at each spatial frequency $k$  taking into the account the different distribution of the arrival angles
 at the BS. For uniform angular distribution on $[\theta_1,\theta_2]$ 
 \vspace{-0.1cm}
 \begin{eqnarray}
 \nonumber
 \mathbb{E}_{\theta}[\mathbf{h}^d_{lc}[k]]&=&\int^{\theta_2}_{\theta_1}\mathbf{h}^d_{lc}[k]dF_\theta
 =\frac{1}{\theta_2-\theta_1}\int^{\theta_2}_{\theta_1}\mathbf{h}^d_{lc}[k]d\theta
 \end{eqnarray}
Since the term $\omega_i=c\sin(\theta_i)$ exist in all $k$ and make the integration
term non integrable. For $k=0$, the integration can be replaced by another integration,
\begin{eqnarray}\nonumber
\hspace{-0.1cm}\mathbb{E}_{\omega}[\mathbf{h}^d_{lc}[0]]&=&\frac{1}{c(\theta_2-\theta_1)}\int^{\omega_2}_{\omega_1}\mathbf{h}^d_{lc}[0]\sqrt{1-\big(\frac{\omega}{c}\big)^2}d\omega\\
&\leq&\frac{}{}\frac{1}{c(\theta_2-\theta_1)}\int^{\omega_2}_{\omega_1}\mathbf{h}^d_{lc}[0]d\omega
\vspace{-0.1cm}
\end{eqnarray}
 For $k=0$, If the angular spread angles close to zero, the previous term
 has a constant value, while if the angular spread is close to $\frac{\pi}{2}$
 the previous integration is close to zero. For high frequencies,
 using the distributive property of integration, we can solve the related
 to integration by setting $x_1=c\sin\theta-{\frac{k\pi}{M}}$ and $x_2=c\sin\theta+{\frac{k\pi}{M}}$
 as:  \begin{eqnarray}\nonumber
 \hspace{-0.7cm}\mathbb{E}_{\omega}[\mathbf{h}^d_{lc}[k]]&=&\frac{1}{\theta_2-\theta_1}
 \int^{c\sin\theta_2-\frac{k\pi}{M}}_{c\sin\theta_1-\frac{k\pi}{M}}\mathbf{h}^d_{lc}[k]\sqrt{1-(\frac{x_1+\frac{k\pi}{M}}{c})^2}dx_1\\\nonumber
 &+&\frac{1}{\theta_2-\theta_1}\int^{c\sin\theta_2+\frac{k\pi}{M}}_{c\sin\theta_1+\frac{k\pi}{M}}\mathbf{h}^d_{lc}[k]\sqrt{1-(\frac{x_2-\frac{k\pi}{M}}{c})^2}dx_2.
 \end{eqnarray}
For angle of arrival close to zero, this integration will be close to zero.
While for angle of arrival close to $\frac{\pi}{2}$, this term will result
into considerable constant, which proves the lemma.

{\textit{Proof of Lemma 4}: If we assume $\theta_s$ is close to zero, finding the expectation for small angles
close to zero at different spatial frequencies as
\begin{eqnarray}
\vspace{-0.1cm}
c(k)=\lim_{M\rightarrow \infty}\mathbb{E}_{\omega}[|\mathbf{a}[k]|]
\end{eqnarray}  }
%\vspace{-0.1cm}

It can be found using (\ref{unif})-(\ref{4}) that the zero frequency converges
to constant while the rest frequencies converge to zero. If $\theta_s$ is
close to $90^{\circ}$, all the frequencies will converge to zero, except
for the
highest spatial frequency.

\begin{figure*}[t]
\vspace{-0.5cm}
\small
\begin{tabular}[t]{c}
\begin{minipage}{17 cm}
\begin{eqnarray}
\label{unif}
\small
\hspace{-0.3cm}c(k)=\small\begin{cases}\begin{array}{ccc} \frac{0.5\lambda
j}{2\pi
d\Delta \theta}\Bigg(\log\frac{1-e^{\frac{j2\pi d\theta_2}{\lambda}}}{1-e^{\frac{j2\pi d\theta_1}{\lambda}}}-\log\frac{e^{\frac{-j2\pi
d\theta_2}{\lambda}}(-1+e^{\frac{-j2\pi d\theta_2}{\lambda}})}{e^{\frac{-j2\pi
d\theta_1}{\lambda}}(-1+e^{\frac{-j2\pi d\theta_1}{\lambda}})}\Bigg)+{1},&
&, M=odd,k=0,\\
 \frac{0.5\lambda
j}{2\pi
d\Delta \theta}\Bigg(\log\frac{1-e^{\frac{j2\pi d\theta_2}{\lambda}}}{1-e^{\frac{j2\pi d\theta_1}{\lambda}}}-\log\frac{e^{\frac{-j2\pi
d\theta_2}{\lambda}}(-1+e^{\frac{-j2\pi d\theta_2}{\lambda}})}{e^{\frac{-j2\pi
d\theta_1}{\lambda}}(-1+e^{\frac{-j2\pi d\theta_1}{\lambda}})}\Bigg)& &, M=even,k=0,\\
0 & & k\neq 0.
\end{array}
\end{cases}
\end{eqnarray}\smallskip
\normalsize
\scriptsize
\begin{eqnarray}
\label{B1}
\mathbb{E}_{\omega}[\sqrt{M}|\mathbf{a}^{\text{DCT}}(0)|]=\small\begin{cases}\begin{array}{cccc}\frac{\lambda}{2\pi
d\Delta\theta}\Bigg(\displaystyle\sum^{\frac{M-1}{2}}_{m=1}\frac{\sin
m\theta_2-\sin m\theta_1}{m}+\frac{2\pi
d}{\lambda}(\theta_2-\theta_1)\Bigg)& &,M=odd, &k=0,\\
\frac{\lambda}{2\pi d\Delta\theta}\displaystyle \sum^{\frac{M}{2}}_{m=1}\frac{\sin m\theta_2-\sin m\theta_1}{m}& & ,M=even, &k=0.\\
\\
\end{array}\end{cases}
\end{eqnarray}
\end{minipage}
\vspace{0.1cm}\\
\hline
\hline
\end{tabular}

\end{figure*}
\scriptsize
\begin{figure*}[t]
\hspace{0.2cm}
\begin{tabular}[t]{c}
\begin{minipage}{17 cm}
\begin{eqnarray}
\label{C1}
\sum^{\infty}_{m=0}\frac{\sin m\theta}{m}=\theta+0.5j\Bigg(\log\bigg(1-e^{j\theta}\bigg)
-\log\bigg(e^{-j\theta}(-1+e^{j\theta})\bigg)\Bigg)
\end{eqnarray}
 \begin{eqnarray}
\label{B2}
\hspace{-2.9cm}\small\mathbf{a}^{\text{DCT}}(k)&=&\hspace{-0.1cm}\small\Bigg(\hspace{-0.1cm}{\Bigg(\frac{\sin(\small(\omega_i+\frac{\pi k}{M}\small)\frac{M}{2})}{\sin(\small(\omega_i+\frac{\pi
k}{M}\small
)\frac{1}{2})}\Bigg)^2\hspace{-0.1cm}+\hspace{-0.1cm}\Bigg(\frac{\sin(\small(\omega_i-\frac{\pi k}{M}\small)\frac{M}{2})}{\sin(\small(\omega_i-\frac{\pi
k}{M}\small)\frac{1}{2})}\Bigg)^2
\hspace{-0.1cm}+\hspace{-0.1cm}2\frac{\sin(\small(\omega_i+\frac{\pi k}{M}\small)\frac{M}{2})}{\sin(\small(\omega_i+\frac{\pi
k}{M}\small
)\frac{1}{2})}\frac{\sin(\small(\omega_i-\frac{\pi k}{M}\small)\frac{M}{2})}{\sin(\small(\omega_i-\frac{\pi
k}{M}\small
)\frac{1}{2})}\cos\Big(\small\frac{\pi(M+1)k}{M}\big)\Bigg)^{\frac{1}{2}}}
\end{eqnarray}
\vspace{-0.2cm}
\begin{eqnarray}
\label{4}
\mathbf{a}^{\text{DCT}}(k)=\begin{cases}\begin{array}{ccc}
\sqrt{{2}} \sqrt{\Big(\frac{\sin(\small(\omega_i+\frac{\pi k}{M}\small)\frac{M}{2})}{\sin(\small(\omega_i+\frac{\pi
k}{M}\small
)\frac{1}{2})}-\frac{\sin(\small(\omega_i-\frac{\pi k}{M}\small)\frac{M}{2})}{\sin(\small(\omega_i-\frac{\pi
k}{M}\small
)\frac{1}{2})}\Big)^2}& \approx 0& ,M={odd}, k\neq 0,\\
\sqrt{2}\sqrt{\Big(\frac{\sin(\small(\omega_i+\frac{\pi k}{M}\small)\frac{M}{2})}{\sin(\small(\omega_i+\frac{\pi
k}{M}\small
)\frac{1}{2})}+\frac{\sin(\small(\omega_i-\frac{\pi k}{M}\small)\frac{M}{2})}{\sin(\small(\omega_i-\frac{\pi
k}{M}\small
)\frac{1}{2})}\Big)^2}& \approx 0&,M={even}, k\neq 0.
\end{array}\end{cases}
\end{eqnarray}
\end{minipage}
\vspace{0.2cm}\\
\hline
\hline
\end{tabular}
\vspace{-0.5cm}
\end{figure*}

\vspace{-0.1cm}

\end{document}